\DeclarePairedDelimiter{\ev}{\langle}{\rangle}
\tikzstyle{infpoint}=[regular polygon,regular polygon sides=3,draw,scale=0.75,inner sep=-0.5pt,minimum width=9mm,fill=white,regular polygon rotate=90]
\tikzstyle{infcopoint}=[regular polygon,regular polygon sides=3,draw,scale=0.75,inner sep=-0.5pt,minimum width=9mm,fill=white,regular polygon rotate=270]
\tikzstyle{fMeas}=[regular polygon,regular polygon sides=3,draw,scale=0.75,inner sep=-0.5pt,minimum width=4mm,fill=white,regular polygon rotate=0,line width=1pt]
\tikzstyle{fPrep}=[regular polygon,regular polygon sides=3,draw,scale=0.75,inner sep=-0.5pt,minimum width=4mm,fill=white,regular polygon rotate=180,line width=1pt]
\tikzstyle{fTransition}=[fill=white,draw, line width = 1pt,inner sep=0.6mm,font=\footnotesize,minimum height=2mm,minimum width=2mm]
 \tikzstyle{rightground}=[circuit ee IEC,thick,ground,rotate=0,xscale=2.5,yscale=2]
\tikzstyle{bwSpider}=[
 \tikzstyle{wbSpider}=[
\tikzstyle{qWire}=[line width = 1pt, color=black]
\tikzstyle{pWire}=[line width = 1pt, color=red!60!black]
\tikzstyle{gWire}=[line width = 1.5pt, color=green!60!black]
\tikzstyle{cWire}=[color=gray,thin]
\tikzstyle{env}=[copoint,regular polygon rotate=0,minimum width=0.2cm, fill=black]
\tikzstyle{probs}=[shape=semicircle,fill=white,draw=black,shape border rotate=180,minimum width=1.2cm]
\tikzstyle{every picture}=[baseline=-0.25em,scale=0.5]
\tikzstyle{dotpic}=[] 
\tikzstyle{diredges}=[every to/.style={diredge}]
\tikzstyle{math matrix}=[matrix of math nodes,left delimiter=(,right delimiter=),inner sep=2pt,column sep=1em,row sep=0.5em,nodes={inner sep=0pt},text height=1.5ex, text depth=0.25ex]
\tikzstyle{inline text}=[text height=1.5ex, text depth=0.25ex,yshift=0.5mm]
\tikzstyle{label}=[font=\footnotesize,text height=1.5ex, text depth=0.25ex,yshift=0.5mm]
\tikzstyle{left label}=[label,anchor=east,xshift=1.5mm]
\tikzstyle{right label}=[label,anchor=west,xshift=-1mm]
\tikzstyle{up label}=[label,anchor=south,yshift=-1mm]
\tikzstyle{braceedge}=[decorate,decoration={brace,amplitude=2mm,raise=-1mm}]
\tikzstyle{small braceedge}=[decorate,decoration={brace,amplitude=1mm,raise=-1mm}]
\tikzstyle{doubled}=[line width=1.6pt] 
\tikzstyle{boldedge}=[doubled,shorten <=-0.17mm,shorten >=-0.17mm]
\tikzstyle{boldedgegray}=[doubled,gray,shorten <=-0.17mm,shorten >=-0.17mm]
\tikzstyle{singleedgegray}=[gray]
\tikzstyle{semidoubled}=[line width=1.4pt] 
\tikzstyle{semiboldedgegray}=[semidoubled,gray,shorten <=-0.17mm,shorten >=-0.17mm]
\tikzstyle{boxedge}=[semiboldedgegray]
\tikzstyle{boldedgedashed}=[very thick,dashed,shorten <=-0.17mm,shorten >=-0.17mm]
\tikzstyle{vboldedgedashed}=[doubled,dashed,shorten <=-0.17mm,shorten >=-0.17mm]
\tikzstyle{left hook arrow}=[left hook-latex]
\tikzstyle{right hook arrow}=[right hook-latex]
\tikzstyle{sembracket}=[line width=0.5pt,shorten <=-0.07mm,shorten >=-0.07mm]
\tikzstyle{causal edge}=[->,thick,gray]
\tikzstyle{causal nondir}=[thick,gray]
\tikzstyle{timeline}=[thick,gray, dashed]
\tikzstyle{cedge}=[<->,thick,gray!70!white]
\tikzstyle{empty diagram}=[draw=gray!40!white,dashed,shape=rectangle,minimum width=1cm,minimum height=1cm]
\tikzstyle{empty diagram small}=[draw=gray!50!white,dashed,shape=rectangle,minimum width=0.6cm,minimum height=0.5cm]
\tikzstyle{dot}=[inner sep=0mm,minimum width=2mm,minimum height=2mm,draw,shape=circle]
\tikzstyle{leak}=[white dot, shape=regular polygon, minimum size=3.3 mm, regular polygon sides=3, outer sep=-0.2mm, regular polygon rotate=270]
\tikzstyle{proj}=[draw,fill=white,chamfered rectangle,chamfered rectangle angle=30, minimum width=2mm, minimum height= 1mm,scale=0.5, outer sep=-0.2mm]
\tikzstyle{wide proj}=[draw,fill=white,chamfered rectangle,chamfered rectangle angle=30, minimum width=15mm,minimum height=1mm,scale=0.5, outer sep=-0.2mm]
\tikzstyle{very wide proj}=[draw,fill=white,chamfered rectangle,chamfered rectangle angle=30, minimum width=25mm,minimum height=1mm,scale=0.5, outer sep=-0.2mm]
\tikzstyle{very very wide proj}=[draw,fill=white,chamfered rectangle,chamfered rectangle angle=30, minimum width=35mm,minimum height=1mm,scale=0.5, outer sep=-0.2mm]
\tikzstyle{preleak}=[proj]
\tikzstyle{split proj out}=[regular polygon,regular polygon sides=3,draw,scale=0.75,inner sep=-0.5pt,minimum width=3.3mm,fill=white,regular polygon rotate=180]
\tikzstyle{split proj in}=[regular polygon,regular polygon sides=3,draw,scale=0.75,inner sep=-0.5pt,minimum width=3.3mm,fill=white]
\tikzstyle{Vleak}=[white dot, shape=regular polygon, minimum size=3.3 mm, regular polygon sides=3, outer sep=-0.2mm, regular polygon rotate=90]
\tikzstyle{dleak}=[white dot, line width=1.6pt, shape=regular polygon, minimum size=3.3 mm, regular polygon sides=3, outer sep=-0.2mm, regular polygon rotate=270]
\tikzstyle{Wsquare}=[white dot, shape=regular polygon, rounded corners=0.8 mm, minimum size=3.3 mm, regular polygon sides=3, outer sep=-0.2mm]
\tikzstyle{Wsquareadj}=[white dot, shape=regular polygon, rounded corners=0.8 mm, minimum size=3.3 mm, regular polygon sides=3, outer sep=-0.2mm, regular polygon rotate=180]
\tikzstyle{ddot}=[inner sep=0mm, doubled, minimum width=2.5mm,minimum height=2.5mm,draw,shape=circle]
\tikzstyle{black dot}=[dot,fill=black]
\tikzstyle{white dot}=[dot,fill=white,,text depth=-0.2mm]
\tikzstyle{white Wsquare}=[Wsquare,fill=gray,,text depth=-0.2mm]
\tikzstyle{white Wsquareadj}=[Wsquareadj,fill=white,,text depth=-0.2mm]
\tikzstyle{green dot}=[white dot] 
\tikzstyle{gray dot}=[dot,fill=gray!40!white,,text depth=-0.2mm]
\tikzstyle{red dot}=[gray dot] 
\tikzstyle{black ddot}=[ddot,fill=black]
\tikzstyle{white ddot}=[ddot,fill=white]
\tikzstyle{gray ddot}=[ddot,fill=gray!40!white]
\tikzstyle{gray edge}=[gray!60!white]
\tikzstyle{small dot}=[inner sep=0.5mm,minimum width=0pt,minimum height=0pt,draw,shape=circle]
\tikzstyle{small black dot}=[small dot,fill=black]
\tikzstyle{small white dot}=[small dot,fill=white]
\tikzstyle{small gray dot}=[small dot,fill=gray!40!white]
\tikzstyle{causal dot}=[inner sep=0.4mm,minimum width=0pt,minimum height=0pt,draw=white,shape=circle,fill=gray!40!white]
\tikzstyle{phase dimensions}=[minimum size=5mm,font=\footnotesize,rectangle,rounded corners=2.5mm,inner sep=0.2mm,outer sep=-2mm]
\tikzstyle{dphase dimensions}=[minimum size=5mm,font=\footnotesize,rectangle,rounded corners=2.5mm,inner sep=0.2mm,outer sep=-2mm]
\tikzstyle{white phase dot}=[dot,fill=white,phase dimensions]
\tikzstyle{white phase ddot}=[ddot,fill=white,dphase dimensions]
\tikzstyle{white rect ddot}=[draw=black,fill=white,doubled,minimum size=5mm,font=\footnotesize,rectangle,rounded corners=2.5mm,inner sep=0.2mm]
\tikzstyle{gray rect ddot}=[draw=black,fill=gray!40!white,doubled,minimum size=6mm,font=\footnotesize,rectangle,rounded corners=3mm]
\tikzstyle{gray phase dot}=[dot,fill=gray!40!white,phase dimensions]
\tikzstyle{gray phase ddot}=[ddot,fill=gray!40!white,dphase dimensions]
\tikzstyle{grey phase dot}=[gray phase dot]
\tikzstyle{grey phase ddot}=[gray phase ddot]
\tikzstyle{small phase dimensions}=[minimum size=4mm,font=\tiny,rectangle,rounded corners=2mm,inner sep=0.2mm,outer sep=-2mm]
\tikzstyle{small dphase dimensions}=[minimum size=4mm,font=\tiny,rectangle,rounded corners=2mm,inner sep=0.2mm,outer sep=-2mm]
\tikzstyle{small gray phase dot}=[dot,fill=gray!40!white,small phase dimensions]
\tikzstyle{small gray phase ddot}=[ddot,fill=gray!40!white,small dphase dimensions]
\tikzstyle{small map}=[draw,shape=rectangle,minimum height=4mm,minimum width=4mm,fill=white]
\tikzstyle{cnot}=[fill=white,shape=circle,inner sep=-1.4pt]
\tikzstyle{asym hadamard}=[fill=white,draw,shape=NEbox,inner sep=0.6mm,font=\footnotesize,minimum height=4mm]
\tikzstyle{asym hadamard conj}=[fill=white,draw,shape=NWbox,inner sep=0.6mm,font=\footnotesize,minimum height=4mm]
\tikzstyle{asym hadamard dag}=[fill=white,draw,shape=SEbox,inner sep=0.6mm,font=\footnotesize,minimum height=4mm]
\tikzstyle{hadamard}=[fill=white,draw,inner sep=0.6mm,font=\footnotesize,minimum height=4mm,minimum width=4mm]
\tikzstyle{small hadamard}=[fill=white,draw,inner sep=0.6mm,minimum height=1.5mm,minimum width=1.5mm]
\tikzstyle{small hadamard rotate}=[small hadamard,rotate=45]
\tikzstyle{dhadamard}=[hadamard,doubled]
\tikzstyle{small dhadamard}=[small hadamard,doubled]
\tikzstyle{small dhadamard rotate}=[small hadamard rotate,doubled]
\tikzstyle{antipode}=[white dot,inner sep=0.3mm,font=\footnotesize]
\tikzstyle{scalar}=[diamond,draw,inner sep=0.5pt,font=\small]
\tikzstyle{dscalar}=[diamond,doubled, draw,inner sep=0.5pt,font=\small]
\tikzstyle{small box}=[rectangle,inline text,fill=white,draw,minimum height=5mm,yshift=-0.5mm,minimum width=5mm,font=\small]
\tikzstyle{small gray box}=[small box,fill=gray!30]
\tikzstyle{medium box}=[rectangle,inline text,fill=white,draw,minimum height=5mm,yshift=-0.5mm,minimum width=10mm,font=\small]
\tikzstyle{square box}=[small box] 
\tikzstyle{medium gray box}=[small box,fill=gray!30]
\tikzstyle{semilarge box}=[rectangle,inline text,fill=white,draw,minimum height=5mm,yshift=-0.5mm,minimum width=12.5mm,font=\small]
\tikzstyle{large box}=[rectangle,inline text,fill=white,draw,minimum height=5mm,yshift=-0.5mm,minimum width=15mm,font=\small]
\tikzstyle{large gray box}=[small box,fill=gray!30]
\tikzstyle{Bayes box}=[rectangle,fill=black,draw, minimum height=3mm, minimum width=3mm]
\tikzstyle{gray square point}=[small box,fill=gray!50]
\tikzstyle{dphase box white}=[dhadamard]
\tikzstyle{dphase box gray}=[dhadamard,fill=gray!50!white]
\tikzstyle{phase box white}=[hadamard]
\tikzstyle{phase box gray}=[hadamard,fill=gray!50!white]
\tikzstyle{point}=[regular polygon,regular polygon sides=3,draw,scale=0.75,inner sep=-0.5pt,minimum width=9mm,fill=white,regular polygon rotate=180]
\tikzstyle{point nosep}=[regular polygon,regular polygon sides=3,draw,scale=0.75,inner sep=-2pt,minimum width=9mm,fill=white,regular polygon rotate=180]
\tikzstyle{copoint}=[regular polygon,regular polygon sides=3,draw,scale=0.75,inner sep=-0.5pt,minimum width=9mm,fill=white]
\tikzstyle{dpoint}=[point,doubled]
\tikzstyle{dcopoint}=[copoint,doubled]
\tikzstyle{pointgrow}=[shape=cornerpoint,kpoint common,scale=0.75,inner sep=3pt]
\tikzstyle{pointgrow dag}=[shape=cornercopoint,kpoint common,scale=0.75,inner sep=3pt]
\tikzstyle{wide copoint}=[fill=white,draw,shape=isosceles triangle,shape border rotate=90,isosceles triangle stretches=true,inner sep=0pt,minimum width=1.5cm,minimum height=6.12mm]
\tikzstyle{wide point}=[fill=white,draw,shape=isosceles triangle,shape border rotate=-90,isosceles triangle stretches=true,inner sep=0pt,minimum width=1.5cm,minimum height=6.12mm,yshift=-0.0mm]
\tikzstyle{wide point plus}=[fill=white,draw,shape=isosceles triangle,shape border rotate=-90,isosceles triangle stretches=true,inner sep=0pt,minimum width=1.74cm,minimum height=7mm,yshift=-0.0mm]
\tikzstyle{wide dpoint}=[fill=white,doubled,draw,shape=isosceles triangle,shape border rotate=-90,isosceles triangle stretches=true,inner sep=0pt,minimum width=1.5cm,minimum height=6.12mm,yshift=-0.0mm]
\tikzstyle{tinypoint}=[regular polygon,regular polygon sides=3,draw,scale=0.55,inner sep=-0.15pt,minimum width=6mm,fill=white,regular polygon rotate=180]
\tikzstyle{white point}=[point]
\tikzstyle{white dpoint}=[dpoint]
\tikzstyle{green point}=[white point] 
\tikzstyle{white copoint}=[copoint]
\tikzstyle{gray point}=[point,fill=gray!40!white]
\tikzstyle{gray dpoint}=[gray point,doubled]
\tikzstyle{red point}=[gray point] 
\tikzstyle{gray copoint}=[copoint,fill=gray!40!white]
\tikzstyle{gray dcopoint}=[gray copoint,doubled]
\tikzstyle{white point guide}=[regular polygon,regular polygon sides=3,font=\scriptsize,draw,scale=0.65,inner sep=-0.5pt,minimum width=9mm,fill=white,regular polygon rotate=180]
\tikzstyle{black point}=[point,fill=black,font=\color{white}]
\tikzstyle{black copoint}=[copoint,fill=black,font=\color{white}]
\tikzstyle{tiny gray point}=[tinypoint,fill=gray!40!white]
\tikzstyle{diredge}=[->]
\tikzstyle{ddiredge}=[<->]
\tikzstyle{rdiredge}=[<-]
\tikzstyle{thickdiredge}=[->, very thick]
\tikzstyle{pointer edge}=[->,very thick,gray]
\tikzstyle{pointer edge part}=[very thick,gray]
\tikzstyle{dashed edge}=[dashed]
\tikzstyle{thick dashed edge}=[very thick,dashed]
\tikzstyle{thick gray dashed edge}=[thick dashed edge,gray!40]
\tikzstyle{thick map edge}=[very thick,|->]
\newcommand{\boxshape}[3]{%
\pgfdeclareshape{#1}{
\inheritsavedanchors[from=rectangle] 
\inheritanchorborder[from=rectangle]
\inheritanchor[from=rectangle]{center}
\inheritanchor[from=rectangle]{north}
\inheritanchor[from=rectangle]{south}
\inheritanchor[from=rectangle]{west}
\inheritanchor[from=rectangle]{east}
\backgroundpath{
\southwest \pgf@xa=\pgf@x \pgf@ya=\pgf@y
\northeast \pgf@xb=\pgf@x \pgf@yb=\pgf@y

\@tempdima=#2
\@tempdimb=#3

\pgfpathmoveto{\pgfpoint{\pgf@xa - 5pt + \@tempdima}{\pgf@ya}}
\pgfpathlineto{\pgfpoint{\pgf@xa - 5pt - \@tempdima}{\pgf@yb}}
\pgfpathlineto{\pgfpoint{\pgf@xb + 5pt + \@tempdimb}{\pgf@yb}}
\pgfpathlineto{\pgfpoint{\pgf@xb + 5pt - \@tempdimb}{\pgf@ya}}
\pgfpathlineto{\pgfpoint{\pgf@xa - 5pt + \@tempdima}{\pgf@ya}}
\pgfpathclose
}
}}
\tikzstyle{cloud}=[shape=cloud,draw,minimum width=1.5cm,minimum height=1.5cm]
\tikzstyle{map}=[draw,shape=NEbox,inner sep=1pt,minimum height=4mm,fill=white]
\tikzstyle{dashedmap}=[draw,dashed,shape=NEbox,inner sep=2pt,minimum height=6mm,fill=white]
\tikzstyle{mapdag}=[draw,shape=SEbox,inner sep=1pt,minimum height=4mm,fill=white]
\tikzstyle{mapadj}=[draw,shape=SEbox,inner sep=2pt,minimum height=6mm,fill=white]
\tikzstyle{maptrans}=[draw,shape=SWbox,inner sep=2pt,minimum height=6mm,fill=white]
\tikzstyle{mapconj}=[draw,shape=NWbox,inner sep=2pt,minimum height=6mm,fill=white]
\tikzstyle{medium map}=[draw,shape=NEbox,inner sep=2pt,minimum height=6mm,fill=white,minimum width=7mm]
\tikzstyle{medium map dag}=[draw,shape=SEbox,inner sep=2pt,minimum height=6mm,fill=white,minimum width=7mm]
\tikzstyle{medium map adj}=[draw,shape=SEbox,inner sep=2pt,minimum height=6mm,fill=white,minimum width=7mm]
\tikzstyle{medium map trans}=[draw,shape=SWbox,inner sep=2pt,minimum height=6mm,fill=white,minimum width=7mm]
\tikzstyle{medium map conj}=[draw,shape=NWbox,inner sep=2pt,minimum height=6mm,fill=white,minimum width=7mm]
\tikzstyle{semilarge map}=[draw,shape=NEbox,inner sep=2pt,minimum height=6mm,fill=white,minimum width=9.5mm]
\tikzstyle{semilarge map trans}=[draw,shape=SWbox,inner sep=2pt,minimum height=6mm,fill=white,minimum width=9.5mm]
\tikzstyle{semilarge map adj}=[draw,shape=SEbox,inner sep=2pt,minimum height=6mm,fill=white,minimum width=9.5mm]
\tikzstyle{semilarge map dag}=[draw,shape=SEbox,inner sep=2pt,minimum height=6mm,fill=white,minimum width=9.5mm]
\tikzstyle{semilarge map conj}=[draw,shape=NWbox,inner sep=2pt,minimum height=6mm,fill=white,minimum width=9.5mm]
\tikzstyle{large map}=[draw,shape=NEbox,inner sep=2pt,minimum height=6mm,fill=white,minimum width=12mm]
\tikzstyle{large map conj}=[draw,shape=NWbox,inner sep=2pt,minimum height=6mm,fill=white,minimum width=12mm]
\tikzstyle{very large map}=[draw,shape=NEbox,inner sep=2pt,minimum height=6mm,fill=white,minimum width=17mm]
\tikzstyle{medium dmap}=[draw,doubled,shape=NEbox,inner sep=2pt,minimum height=6mm,fill=white,minimum width=7mm]
\tikzstyle{medium dmap dag}=[draw,doubled,shape=SEbox,inner sep=2pt,minimum height=6mm,fill=white,minimum width=7mm]
\tikzstyle{medium dmap adj}=[draw,doubled,shape=SEbox,inner sep=2pt,minimum height=6mm,fill=white,minimum width=7mm]
\tikzstyle{medium dmap trans}=[draw,doubled,shape=SWbox,inner sep=2pt,minimum height=6mm,fill=white,minimum width=7mm]
\tikzstyle{medium dmap conj}=[draw,doubled,shape=NWbox,inner sep=2pt,minimum height=6mm,fill=white,minimum width=7mm]
\tikzstyle{semilarge dmap}=[draw,doubled,shape=NEbox,inner sep=2pt,minimum height=6mm,fill=white,minimum width=9.5mm]
\tikzstyle{semilarge dmap trans}=[draw,doubled,shape=SWbox,inner sep=2pt,minimum height=6mm,fill=white,minimum width=9.5mm]
\tikzstyle{semilarge dmap adj}=[draw,doubled,shape=SEbox,inner sep=2pt,minimum height=6mm,fill=white,minimum width=9.5mm]
\tikzstyle{semilarge dmap dag}=[draw,doubled,shape=SEbox,inner sep=2pt,minimum height=6mm,fill=white,minimum width=9.5mm]
\tikzstyle{semilarge dmap conj}=[draw,doubled,shape=NWbox,inner sep=2pt,minimum height=6mm,fill=white,minimum width=9.5mm]
\tikzstyle{large dmap}=[draw,doubled,shape=NEbox,inner sep=2pt,minimum height=6mm,fill=white,minimum width=12mm]
\tikzstyle{large dmap conj}=[draw,doubled,shape=NWbox,inner sep=2pt,minimum height=6mm,fill=white,minimum width=12mm]
\tikzstyle{large dmap trans}=[draw,doubled,shape=SWbox,inner sep=2pt,minimum height=6mm,fill=white,minimum width=12mm]
\tikzstyle{large dmap adj}=[draw,doubled,shape=SEbox,inner sep=2pt,minimum height=6mm,fill=white,minimum width=12mm]
\tikzstyle{large dmap dag}=[draw,doubled,shape=SEbox,inner sep=2pt,minimum height=6mm,fill=white,minimum width=12mm]
\tikzstyle{very large dmap}=[draw,doubled,shape=NEbox,inner sep=2pt,minimum height=6mm,fill=white,minimum width=19.5mm]
\tikzstyle{muxbox}=[draw,shape=rectangle,minimum height=3mm,minimum width=3mm,fill=white]
\tikzstyle{dmuxbox}=[muxbox,doubled]
\tikzstyle{box}=[draw,shape=rectangle,inner sep=2pt,minimum height=6mm,minimum width=6mm,fill=white]
\tikzstyle{dbox}=[draw,doubled,shape=rectangle,inner sep=2pt,minimum height=6mm,minimum width=6mm,fill=white]
\tikzstyle{dmap}=[draw,doubled,shape=NEbox,inner sep=2pt,minimum height=6mm,fill=white]
\tikzstyle{dmapdag}=[draw,doubled,shape=SEbox,inner sep=2pt,minimum height=6mm,fill=white]
\tikzstyle{dmapadj}=[draw,doubled,shape=SEbox,inner sep=2pt,minimum height=6mm,fill=white]
\tikzstyle{dmaptrans}=[draw,doubled,shape=SWbox,inner sep=2pt,minimum height=6mm,fill=white]
\tikzstyle{dmapconj}=[draw,doubled,shape=NWbox,inner sep=2pt,minimum height=6mm,fill=white]
\tikzstyle{ddmap}=[draw,doubled,dashed,shape=NEbox,inner sep=2pt,minimum height=6mm,fill=white]
\tikzstyle{ddmapdag}=[draw,doubled,dashed,shape=SEbox,inner sep=2pt,minimum height=6mm,fill=white]
\tikzstyle{ddmapadj}=[draw,doubled,dashed,shape=SEbox,inner sep=2pt,minimum height=6mm,fill=white]
\tikzstyle{ddmaptrans}=[draw,doubled,dashed,shape=SWbox,inner sep=2pt,minimum height=6mm,fill=white]
\tikzstyle{ddmapconj}=[draw,doubled,dashed,shape=NWbox,inner sep=2pt,minimum height=6mm,fill=white]
\tikzstyle{smap}=[draw,shape=sNEbox,fill=white]
\tikzstyle{smapdag}=[draw,shape=sSEbox,fill=white]
\tikzstyle{smapadj}=[draw,shape=sSEbox,fill=white]
\tikzstyle{smaptrans}=[draw,shape=sSWbox,fill=white]
\tikzstyle{smapconj}=[draw,shape=sNWbox,fill=white]
\tikzstyle{dsmap}=[draw,dashed,shape=sNEbox,fill=white]
\tikzstyle{dsmapdag}=[draw,dashed,shape=sSEbox,fill=white]
\tikzstyle{dsmaptrans}=[draw,dashed,shape=sSWbox,fill=white]
\tikzstyle{dsmapconj}=[draw,dashed,shape=sNWbox,fill=white]
\tikzstyle{mmap}=[draw,shape=mNEbox]
\tikzstyle{mmapdag}=[draw,shape=mSEbox]
\tikzstyle{mmaptrans}=[draw,shape=mSWbox]
\tikzstyle{mmapconj}=[draw,shape=mNWbox]
\tikzstyle{mmapgray}=[draw,fill=gray!40!white,shape=mNEbox]
\tikzstyle{smapgray}=[draw,fill=gray!40!white,shape=sNEbox]
\pgfmathsetmacro{\pgf@shorten@left}{\pgfkeysvalueof{/tikz/shorten left}}
\pgfmathsetmacro{\pgf@shorten@right}{\pgfkeysvalueof{/tikz/shorten right}}
\pgfmathsetmacro{\pgf@shorten@left}{\pgfkeysvalueof{/tikz/shorten left}}
\pgfmathsetmacro{\pgf@shorten@right}{\pgfkeysvalueof{/tikz/shorten right}}
\tikzstyle{kpoint common}=[draw,fill=white,inner sep=1pt,minimum height=4mm]
\tikzstyle{kpoint sc}=[shape=cornerpoint,kpoint common]
\tikzstyle{kpoint adjoint sc}=[shape=cornercopoint,kpoint common]
\tikzstyle{kpoint}=[shape=cornerpoint,shorten left=5pt,kpoint common]
\tikzstyle{kpoint adjoint}=[shape=cornercopoint,shorten left=5pt,kpoint common]
\tikzstyle{kpoint conjugate}=[shape=cornerpoint,shorten right=5pt,kpoint common]
\tikzstyle{kpoint transpose}=[shape=cornercopoint,shorten right=5pt,kpoint common]
\tikzstyle{kpoint symm}=[shape=cornerpoint,shorten left=5pt,shorten right=5pt,kpoint common]
\tikzstyle{wide kpoint sc}=[shape=cornerpoint,kpoint common, minimum width=1 cm]
\tikzstyle{wide kpointdag sc}=[shape=cornercopoint,kpoint common, minimum width=1 cm]
\tikzstyle{black kpoint}=[shape=cornerpoint,shorten left=5pt,kpoint common,fill=black,font=\color{white}]
\tikzstyle{black kpoint sm}=[shape=cornerpoint,shorten left=5pt,kpoint common,fill=black,font=\color{white},scale=0.75]
\tikzstyle{black kpoint adjoint}=[shape=cornercopoint,shorten left=5pt,kpoint common,fill=black,font=\color{white}]
\tikzstyle{black kpointadj}=[shape=cornercopoint,shorten left=5pt,kpoint common,fill=black,font=\color{white}]
\tikzstyle{black kpointadj sm}=[shape=cornercopoint,shorten left=5pt,kpoint common,fill=black,font=\color{white},scale=0.75]
\tikzstyle{black dkpoint}=[shape=cornerpoint,shorten left=5pt,kpoint common,fill=black, doubled,font=\color{white}]
\tikzstyle{black dkpoint adjoint}=[shape=cornercopoint,shorten left=5pt,kpoint common,fill=black, doubled,font=\color{white}]
\tikzstyle{black dkpointadj}=[shape=cornercopoint,shorten left=5pt,kpoint common,fill=black, doubled,font=\color{white}]
\tikzstyle{black dkpoint sm}=[shape=cornerpoint,shorten left=5pt,kpoint common,fill=black, doubled,font=\color{white},scale=0.75]
\tikzstyle{black dkpointadj sm}=[shape=cornercopoint,shorten left=5pt,kpoint common,fill=black, doubled,font=\color{white},scale=0.75]
\tikzstyle{kpointdag}=[kpoint adjoint]
\tikzstyle{kpointadj}=[kpoint adjoint]
\tikzstyle{kpointconj}=[kpoint conjugate]
\tikzstyle{kpointtrans}=[kpoint transpose]
\tikzstyle{big kpoint}=[kpoint, minimum width=1.2 cm, minimum height=8mm, inner sep=4pt, text depth=3mm]
\tikzstyle{wide kpoint}=[kpoint, minimum width=1 cm, inner sep=2pt]
\tikzstyle{wide kpointdag}=[kpointdag, minimum width=1 cm, inner sep=2pt]
\tikzstyle{wide kpointconj}=[kpointconj, minimum width=1 cm, inner sep=2pt]
\tikzstyle{wide kpointtrans}=[kpointtrans, minimum width=1 cm, inner sep=2pt]
\tikzstyle{wider kpoint}=[kpoint, minimum width=1.25 cm, inner sep=2pt]
\tikzstyle{wider kpointdag}=[kpointdag, minimum width=1.25 cm, inner sep=2pt]
\tikzstyle{wider kpointconj}=[kpointconj, minimum width=1.25 cm, inner sep=2pt]
\tikzstyle{wider kpointtrans}=[kpointtrans, minimum width=1.25 cm, inner sep=2pt]
\tikzstyle{gray kpoint}=[kpoint,fill=gray!50!white]
\tikzstyle{gray kpointdag}=[kpointdag,fill=gray!50!white]
\tikzstyle{gray kpointadj}=[kpointadj,fill=gray!50!white]
\tikzstyle{gray kpointconj}=[kpointconj,fill=gray!50!white]
\tikzstyle{gray kpointtrans}=[kpointtrans,fill=gray!50!white]
\tikzstyle{gray dkpoint}=[kpoint,fill=gray!50!white,doubled]
\tikzstyle{gray dkpointdag}=[kpointdag,fill=gray!50!white,doubled]
\tikzstyle{gray dkpointadj}=[kpointadj,fill=gray!50!white,doubled]
\tikzstyle{gray dkpointconj}=[kpointconj,fill=gray!50!white,doubled]
\tikzstyle{gray dkpointtrans}=[kpointtrans,fill=gray!50!white,doubled]
\tikzstyle{white label}=[draw,fill=white,rectangle,inner sep=0.7 mm]
\tikzstyle{gray label}=[draw,fill=gray!50!white,rectangle,inner sep=0.7 mm]
\tikzstyle{black label}=[draw,fill=black,rectangle,inner sep=0.7 mm]
\tikzstyle{dkpoint}=[kpoint,doubled]
\tikzstyle{wide dkpoint}=[wide kpoint,doubled]
\tikzstyle{dkpointdag}=[kpoint adjoint,doubled]
\tikzstyle{wide dkpointdag}=[wide kpointdag,doubled]
\tikzstyle{dkcopoint}=[kpoint adjoint,doubled]
\tikzstyle{dkpointadj}=[kpoint adjoint,doubled]
\tikzstyle{dkpointconj}=[kpoint conjugate,doubled]
\tikzstyle{dkpointtrans}=[kpoint transpose,doubled]
\tikzstyle{kscalar}=[kpoint common, shape=EBox, inner xsep=-1pt, inner ysep=3pt,font=\small]
\tikzstyle{kscalarconj}=[kpoint common, shape=WBox, inner xsep=-1pt, inner ysep=3pt,font=\small]
\tikzstyle{spekpoint}=[kpoint sc,minimum height=5mm,inner sep=3pt]
\tikzstyle{spekcopoint}=[kpoint adjoint sc,minimum height=5mm,inner sep=3pt]
\tikzstyle{dspekpoint}=[spekpoint,doubled]
\tikzstyle{dspekcopoint}=[spekcopoint,doubled]
 \tikzstyle{upground}=[circuit ee IEC,thick,ground,rotate=90,scale=2.5]
 \tikzstyle{downground}=[circuit ee IEC,thick,ground,rotate=-90,scale=2.5]
 \tikzstyle{bigground}=[regular polygon,regular polygon sides=3,draw=gray,scale=0.50,inner sep=-0.5pt,minimum width=10mm,fill=gray]
\tikzstyle{arrs}=[-latex,font=\small,auto]
\tikzstyle{arrow plain}=[arrs]
\tikzstyle{arrow dashed}=[dashed,arrs]
\tikzstyle{arrow bold}=[very thick,arrs]
\tikzstyle{arrow hide}=[draw=white!0,-]
\tikzstyle{arrow reverse}=[latex-]
\tikzstyle{cdnode}=[]
\newtheorem{theo}{Theorem}
\newtheorem{thm}[theo]{Theorem}
\newtheorem{lem}[theo]{Lemma}
\newtheorem{defn}[theo]{Definition}
\newcommand{\cH}{\mathcal{H}}
\newcommand{\I}{\mathcal{I}}
\newcommand{\id}{\mathbb{I}}
\newcommand{\In}{\textbf{I}}
\newcommand{\tr}[2]{\mathrm{tr}_{#2} \left\{ #1 \right\}}
\newcommand{\X}{\mathbb{X}}
\newcommand{\Y}{\mathbb{Y}}
\newcommand{\A}{\mathbb{A}}
\newcommand{\D}{\mathbb{D}}
\newcommand{\W}{\mathbb{W}}
\newcommand{\C}{\mathbb{C}}
\newcommand{\Z}{\mathbb{Z}}
\newcommand{\U}{\mathbb{U}}
\newcommand{\As}{\boldsymbol{\Sigma}}
\newcommand{\B}{\mathbb{B}}
\newcommand{\N}{\textbf{N}}
\newcommand{\Ne}{\mathcal{N}}
\newcommand{\cc}{\textbf{c}}
\newcommand{\w}{\textbf{w}}
\newcommand{\z}{\textbf{z}}
\newcommand{\M}{\mathrm{M}}
\begin{document}

\title{Activation of post-quantumness in bipartite generalised EPR scenarios}

\author{Beata Zjawin}
\email{beata.zjawin@phdstud.edu.ug.pl}
\affiliation{International Centre for Theory of Quantum Technologies, University of  Gda{\'n}sk, 80-308 Gda{\'n}sk, Poland}

\author{Matty J. Hoban}
\affiliation{Quantum Group, Department of Computer Science, University of Oxford, United Kingdom}

\author{Paul Skrzypczyk}
\affiliation{H. H. Wills Physics Laboratory, University of Bristol, Tyndall Avenue, Bristol, BS8 1TL, UK}
\affiliation{CIFAR Azrieli Global Scholars program, CIFAR, Toronto, Canada}

\author{Ana Bel\'en Sainz}
\affiliation{International Centre for Theory of Quantum Technologies, University of  Gda{\'n}sk, 80-308 Gda{\'n}sk, Poland}

\date{\today}

\begin{abstract}
In a standard bipartite Einstein-Podolsky-Rosen (EPR) scenario, Alice and Bob share a system prepared in an entangled state and Alice performs local measurements. One possible generalisation of this set-up is to allow Bob to also locally process his subsystem. Then, correlations generated in such generalised EPR scenarios are examples of non-signalling bipartite resources, called assemblages, that can exhibit post-quantum behavior,  i.e., cannot be generated using solely quantum systems. There exist assemblages that, despite being post-quantum resources, can only generate quantum correlations in bipartite Bell-type scenarios. Here, we present a protocol for activation of post-quantumness in bipartite generalised EPR scenarios such as the so-called Bob-with-input, measurement-device-independent, and channel EPR scenarios. By designing a protocol that involves distributing the assemblages in a larger network, we derive tailored Bell inequalities which can be violated beyond their quantum bound in this new set-up. Our results show that in all of the above generalised scenarios, the post-quantumness of the assemblages can be witnessed at the level of correlations they produce. 
\end{abstract}

\maketitle

\tableofcontents

\section{Introduction}

In the bipartite Einstein-Podolsky-Rosen (EPR) scenario~\cite{einstein1935can,schrodinger1935discussion,cavalcanti2009experimental}, nonclassical correlations are generated between two distant parties, Alice and Bob, upon local measurements on half of a system prepared in an entangled state. Along with Bell-nonclassicality~\cite{Bell64,clauser1969proposed,brunner2014bell} and entanglement~\cite{LOCCentang}, it is one of the notable resources studied in quantum theory. Correlations generated in the EPR scenario are crucial for various information-theoretic tasks~\cite{wiseman2007steering,uola2020quantum}, such as quantum cryptography~\cite{branciard2012one,gianissecretsharing}, entanglement certification~\cite{cavalcanti2015detection,mattar2017experimental} and randomness certification~\cite{passaro2015optimal,law2014quantum}. From the quantum foundations perspective, the EPR scenario allows us to study the structure and limitations of quantum theory~\cite{reid2009colloquium,sainz2018formalism,cavalcanti2022post,EPRLOSR,rossi2022characterising}. 

Nonclassical resources generated in generalised EPR scenarios~\cite{sainz2020bipartite,cavalcanti2013entanglement,piani2015channel,schmid2020type,rosset2020type,channelEPR}, whose formal definitions we recall below, exhibit a richer structure than those in the standard EPR scenario. Notably, these resources do not require the shared system between Alice and Bob to be quantum; instead, it can be a post-quantum system from a more general theory. Therefore, examining generalised EPR scenarios allows us to study quantum theory from the outside, helping us explore its limitations. A significant example of the advantage of post-quantum resources over quantum ones is the task of remote state preparation, where post-quantum assemblages outperform quantum resources~\cite{cavalcanti2022post}. Additionally, post-quantum resources are relevant to the quantum causal models program~\cite{tucci1997quantum,costa2016quantum,chaves2015information,cavalcanti2014modifications,leifer2013towards,pienaar2020quantum,allen2017quantum}. When two parties share a system and perform local measurements on it, determining the nature of this system (whether it is classical, quantum or post-quantum) is a non-trivial task. Investigating post-quantum resources allows us to rule out the existence of a quantum common cause in such scenarios.

The standard EPR scenario~\cite{einstein1935can,schrodinger1935discussion,cavalcanti2009experimental} is illustrated in Fig.~\subref{fig:steering-scenario}, where the classical and quantum systems are depicted with single and double lines, respectively. Conventionally, Alice and Bob share a quantum system denoted by a density operator $\rho_{AB}$. Alice performs a local measurement on her subsystem (associated with a generalised measurement, i.e., a positive operator-valued measure (POVM) $\{M_{a|x}\}_{a,x}$): she chooses a measurement setting labeled by $x\in\X$ and registers a classical outcome $a\in\A$ with probability $p(a|x)$. Upon this measurement, the subsystem in Bob's laboratory can be described by a conditional state $\rho_{a|x}$. The relevant object of study in the EPR scenario is an \emph{assemblage} \cite{pusey2013negativity} defined as $\As_{\A|\X}= \{\sigma_{a|x}\}_{a,x}$, where each unnormalised state $\sigma_{a|x}=\tr{(M_{a|x}\otimes \id)\rho_{AB}}{A}$ is given by $\sigma_{a|x} := p(a|x) \rho_{a|x}$. In this paper, we do not limit Alice and Bob's shared resources to be quantum systems nor Alice to perform quantum measurements, but we do limit Bob to having a quantum system. That is, we take the space of all resources to be the non-signalling ones (meaning that the parties cannot utilise the shared system for instantaneous communication). Notably, in the standard bipartite EPR scenario, this does not make a difference. Due to the GHJW theorem, proven by Gisin \cite{gisin1989stochastic} and Hughston, Jozsa, and Wootters \cite{hughston1993complete}, all possible non-signalling assemblages can be reproduced by Alice making a measurement on a quantum system. However, this is not true for non-bipartite scenarios, as shown in Ref. \cite{sainz2015postquantum}; in this work we will not consider such multipartite settings.

Different generalisations of this
standard EPR scenario have been introduced in recent years, wherein Bob may also probe his system
in various ways. Then, it is possible to relax the condition of the shared physical system being quantum, and consider more general theories in the set-up of this experiment (\textit{post-quantum} systems). Distinction between these generalisations is made by different processing types on Bob's side, which are characterised by different input and output types in Fig.~\ref{fig:small-resources}. In every type of EPR scenario, the relevant assemblage is then given by a collection of ensembles of Bob's conditional processes labeled by Alice's measurement input and output variables. EPR scenarios considered in this paper are the following:
\begin{compactitem}
\item[($b$)] Bob-with-input EPR scenario~\cite{sainz2020bipartite}: a bipartite EPR scenario, where Bob has a classical input which allows him to locally influence the state preparation of his quantum output system.
\item[($c$)] Measurement-device-independent EPR scenario~\cite{cavalcanti2013entanglement,channelEPR}: a bipartite EPR scenario where Bob has a measurement channel with a quantum input and a classical output, i.e., a quantum instrument with a trivial output Hilbert space. 
\item[($d$)] Channel EPR scenario~\cite{piani2015channel,channelEPR}: a bipartite EPR scenario where Bob has quantum input and output systems.
\end{compactitem}
These scenarios are all closely related, and a unified framework for understanding them was introduced in Refs.~\cite{schmid2020type,rosset2020type,channelEPR}. In particular, all these scenarios have a \textit{common-cause} structure (Alice and Bob share a physical system) and Alice is measuring her local subsystem. Therefore, the correlations generated between Alice and Bob depend on the setting and output of Alice's measurement.

\begin{figure}[h!]

\subfloat[\label{fig:steering-scenario}]{%
  \includegraphics[width=0.23\textwidth]{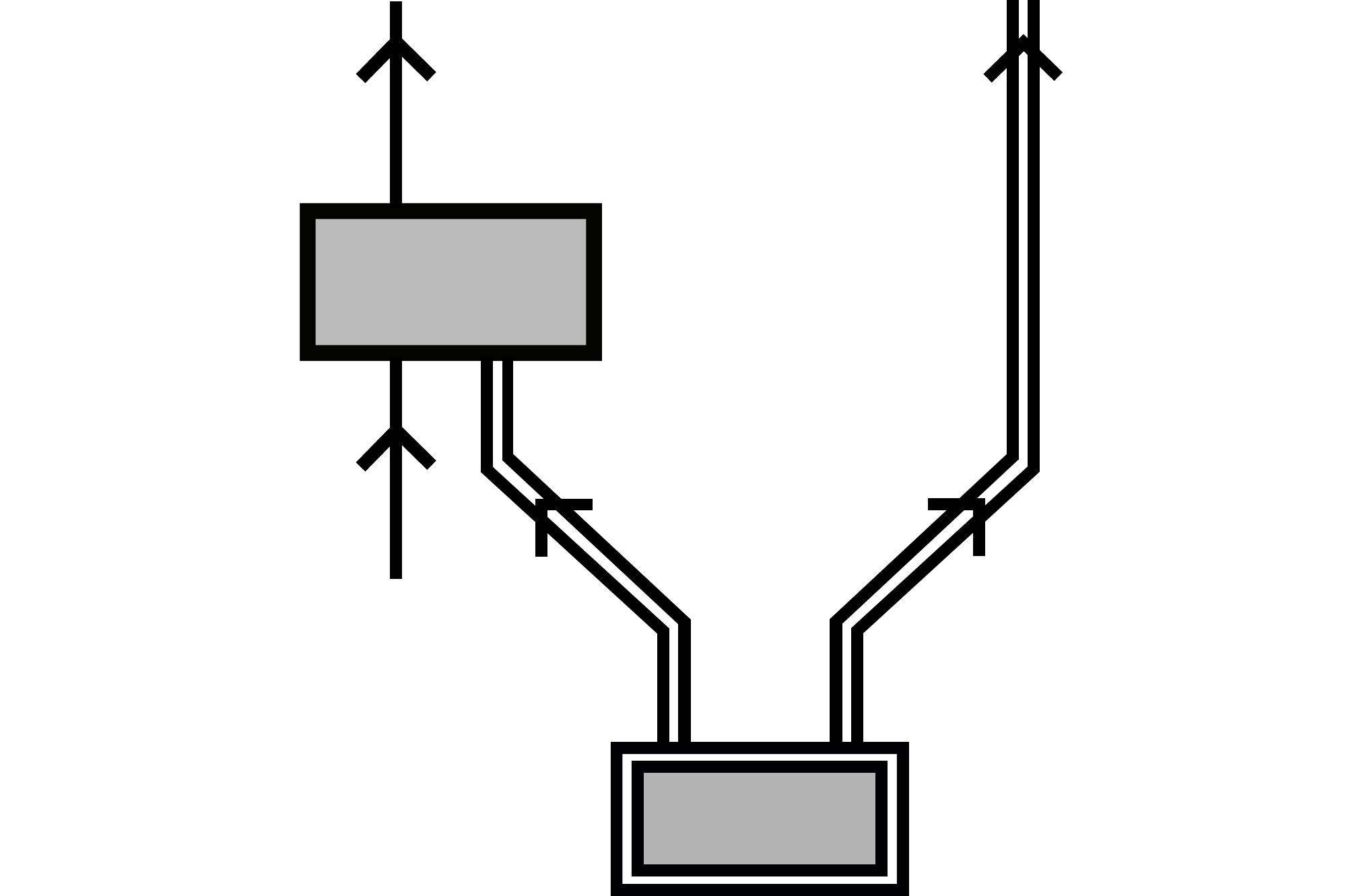}%
    \put(-90,30){$x$}
\put(-90,60){$a$}
\put(-60,70){$\sigma_{a|x}$ }
\put(-22,60){$\cH_{B}$}
}\hfill
\subfloat[\label{fig:BWI-scenario-small}]{%
  \includegraphics[width=0.15\textwidth]{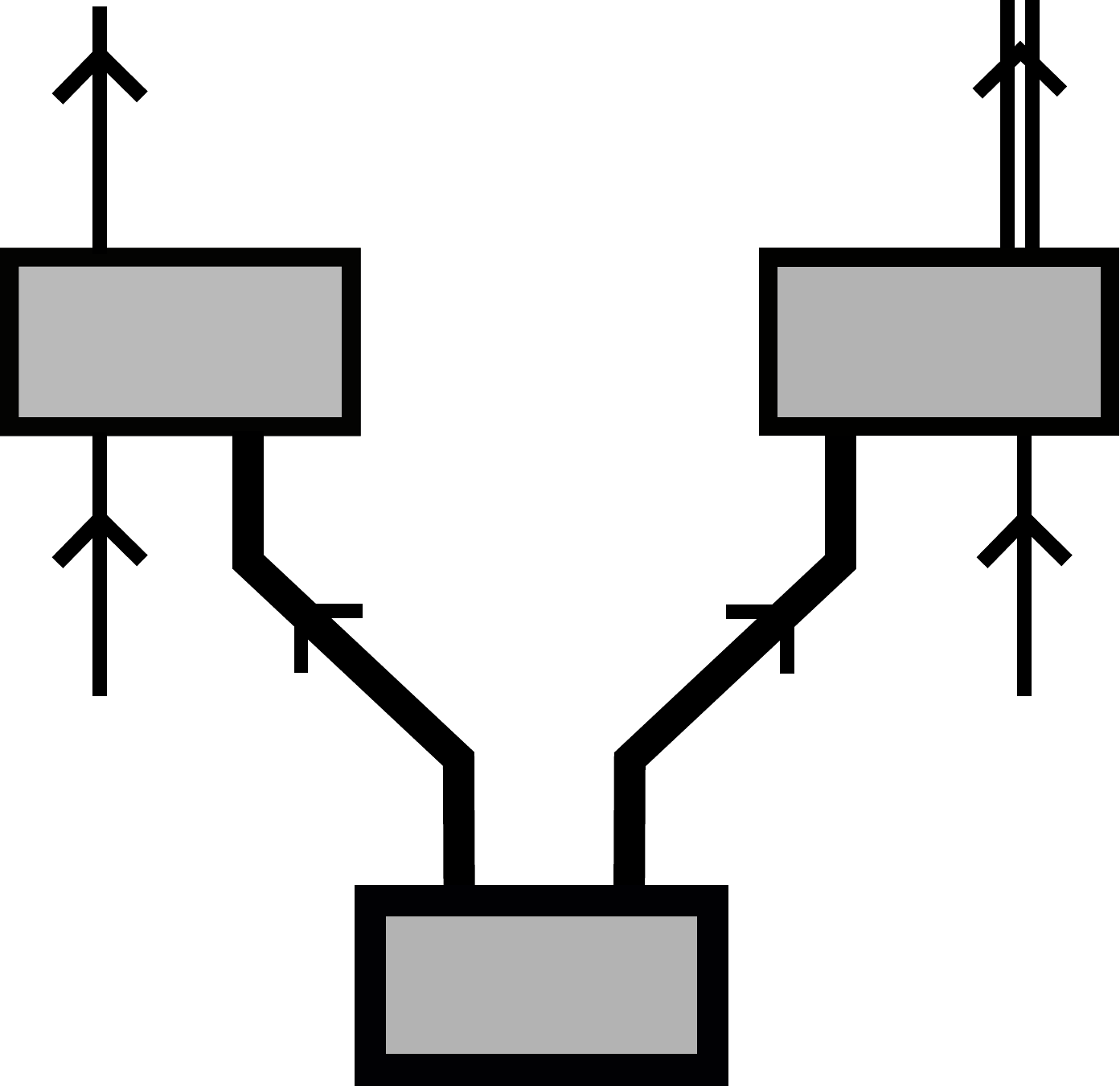}%
    \put(-75,30){$x$}
\put(-75,60){$a$}
\put(-47,70){$\sigma_{a|xy}$ }
\put(0,60){$\cH_{B}$}
\put(0,30){$y$}
  
}\hfill
\subfloat[\label{fig:MDI-scenario}]{%
  \includegraphics[width=0.15\textwidth]{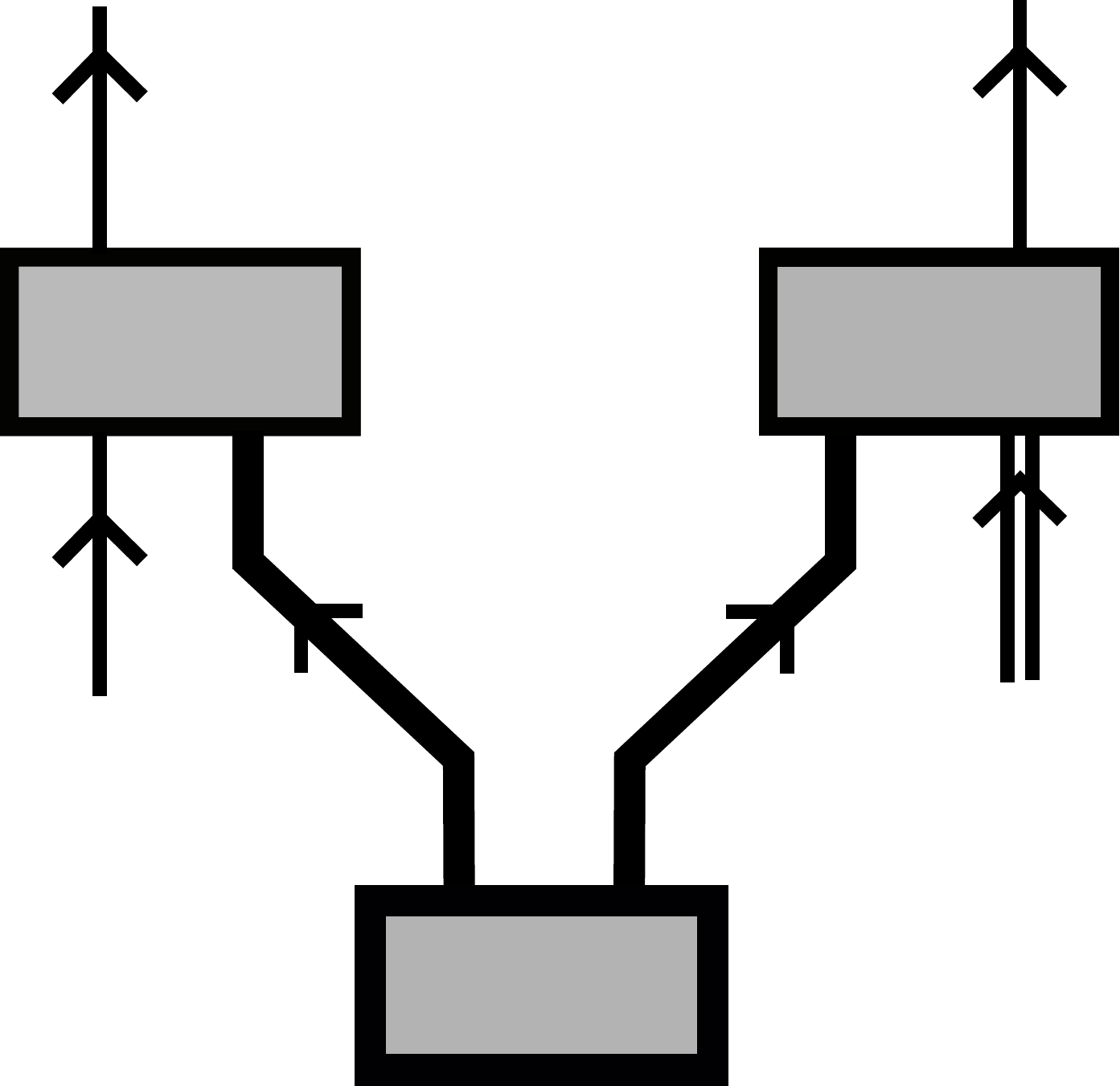}%
    \put(-75,30){$x$}
\put(-75,60){$a$}
\put(-50,70){$\Ne_{ab|x}(\cdot)$ }
\put(0,60){$b$}
\put(0,30){$\cH_{B_{in}}$}
}\hfill
\subfloat[\label{fig:channel-scenario}]{%
\includegraphics[width=0.23\textwidth]{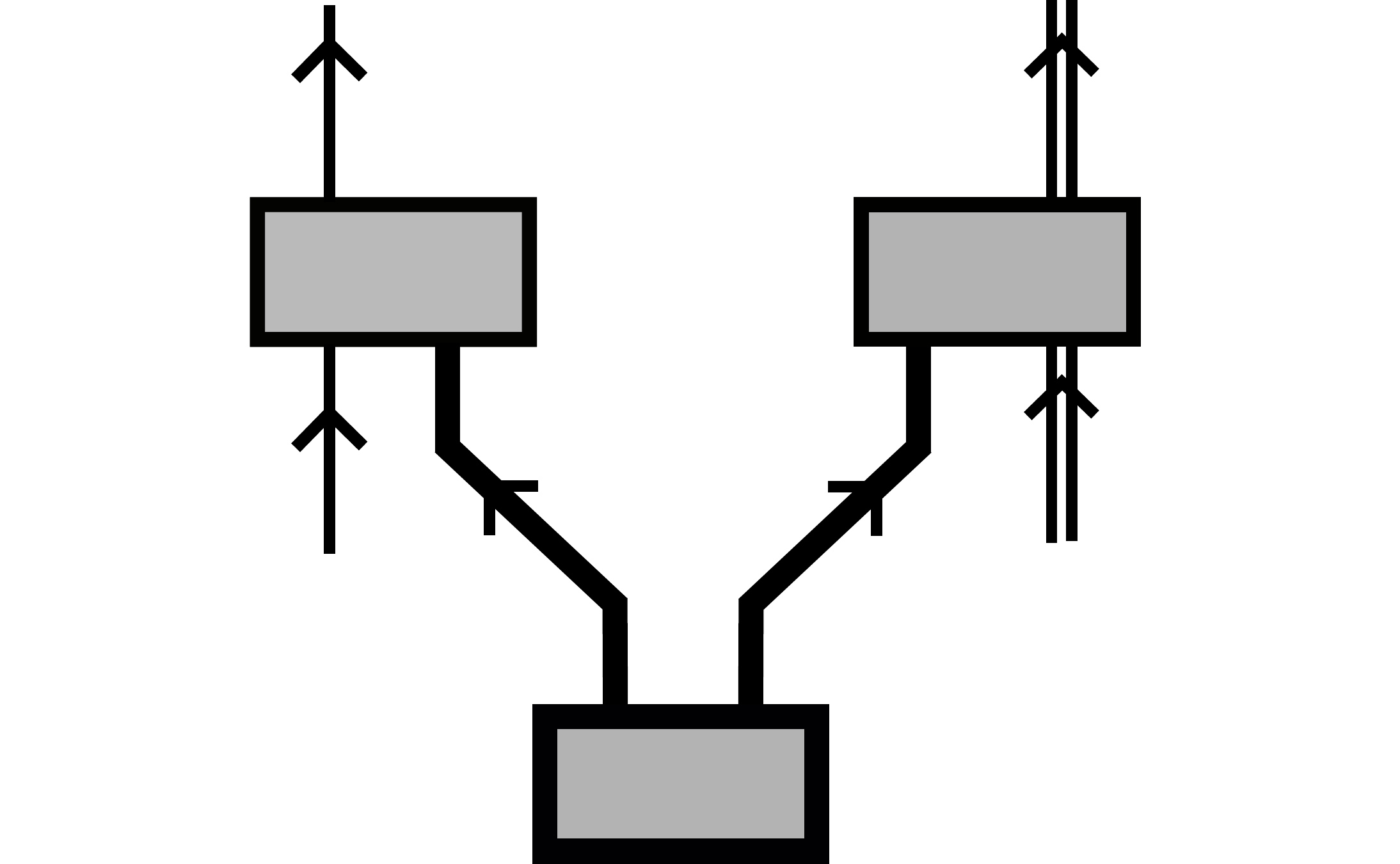}%
  \put(-95,30){$x$}
\put(-95,60){$a$}
\put(-70,70){$\I_{a|x}(\cdot)$ }
\put(-22,60){$\cH_{B_{out}}$}
\put(-22,30){$\cH_{B_{in}}$}

}

  \caption{\textbf{generalisations of the standard EPR scenario:} (a) Standard bipartite EPR scenario. (b)~Bob-with-input EPR scenario. (c) Measurement-device-independent EPR scenario. (d) Channel EPR scenario. Quantum and classical systems are depicted by double and single lines, respectively. Thick black lines depict the possibility that the shared systems may be classical, quantum, or even post-quantum. Formal definitions of the assemblage elements for each EPR scenario ($\sigma_{a|x}$ for the Bob-with-input scenario, $\Ne_{ab|x}(\cdot)$ for the measurement-device-independent scenario and $\I_{a|x}(\cdot)$ for the channel scenario) are introduced in the following sections. 
  }\label{fig:small-resources}
\end{figure}

In contrast to the standard EPR scenario, its generalisations can exhibit post-quantum resources. Moreover, for each of the generalisations introduced above, there exist assemblages such that their post-quantumness cannot be tested in a bipartite device-independent set-up\footnote{Here, the concept of a bipartite device-independent set-up is quite expansive. Essentially, device-independent set-ups are configurations involving devices with classical inputs and outputs whose characteristics are not specified. Further elaboration on this will be provided for each EPR scenario later on.}. For such assemblages, testing their post-quantumness requires either evaluating EPR functionals or using hierarchies of semi-definite programs~\cite{sainz2020bipartite, channelEPR,hoban2022hierarchy}. 

In this paper, we design a protocol for activating post-quantumness of bipartite assemblages in device-independent multipartite settings. We show how to distribute and process a post-quantum assemblage such that a device-independent test in this new setting can certify post-quantumness of the assemblage. The protocol has a similar structure for all types of generalised EPR scenarios. First, an additional quantum resource (a standard bipartite EPR assemblage) shared between Bob and an additional third party is self-tested. In this step of the protocol, Bob measures his share of this additional resource to learn about its structure in a device-independent way. Second, this additional quantum resource is used to promote EPR functionals to Bell functionals on correlations generated in this new larger network (with Alice, Bob, and the additional third party). Then, these Bell functionals can be used to certify post-quantumness of the correlations generated in the protocol set-up.

Given that the protocols slightly differ for various types of generalised EPR scenarios, below we dedicate a section to each scenario to introduce and analyze its protocol. In the main body of the paper, we restrict our considerations to qubit assemblages; generalisation to higher-dimensional assemblages is discussed in the Appendix~\ref{app:higher}. The method that we use to show activation of post-quantumness is analogous to the one introduced in Ref.~\cite{activation} for multipartite assemblages, which adapts the results of Refs.~\cite{bowles2018device,bowles2018self} for the self-testing stage and the general idea for the protocol. The most significant difference between our set-ups is that we focus on bipartite scenarios in which now Bob can apply a local processing on the system shared with Alice. 

\section{Bob-with-input scenario}
\subsection{Description of the scenario}

In the Bob-with-input (BwI) EPR scenario, Bob can choose the value of a classical input $y \in \Y$ which influences the local transformation of his subsystem (not necessarily a quantum one) into a new quantum system. Alice's local operations are the same as in the standard EPR scenario. Let $p(a|x)$ denote the probability that Alice obtains outcome $a$ when performing measurement $x$. Then, if in addition Bob chooses an input $y$, the normalised state in his laboratory is given by $\rho_{a|xy}$. The relevant object of study in this scenario is a \textit{Bob-with-input assemblage} which is given by $\As_{\A|\X\Y}=\{\sigma_{a|xy}\}_{a,x,y}$, with $\sigma_{a|xy}=p(a|x)\rho_{a|xy}$. The most general constraints on the possible unnormalised states $\sigma_{a|xy}$ are the following.
\begin{defn}[Non-signalling Bob-with-input assemblages]
An assemblage $\As_{\A|\X\Y}$ with elements $\{\sigma_{a|xy}\}_{a,x,y}$ is a valid non-signalling assemblage in a Bob-with-input EPR scenario if the following constraints are satisfied
\begin{align}
& \sigma_{a|xy}  \geq 0 \quad \forall \, a, x, y\,, \\
& \tr{\sum_a \sigma_{a|xy}}{}=1 \quad \forall \, x, y\,, \\
 & \tr{\sigma_{a|xy}}{}=p(a|x) \quad \forall \, a, x, y\,, \label{eq:bwins1}\\
 & \sum_a \sigma_{a|xy}=\sum_a \sigma_{a|x'y} \quad \forall \, x, y, x'\label{eq:bwins2}.
\end{align}
\end{defn}
\noindent
Conditions \eqref{eq:bwins1} and \eqref{eq:bwins2} are imposed by relativistic causality. They simply imply that Alice and Bob cannot signal to each other.

It was shown in Ref.~\cite{sainz2020bipartite} that there exist non-signalling BwI assemblages that do not admit a quantum realisation of the form $\sigma_{a|xy}=\tr{(M_{a|x}\otimes \mathcal{E}_y)\rho_{AB}}{A}$, i.e., cannot be generated by Alice and Bob sharing a quantum state $\rho_{AB}$, Bob applying channels (completely positive and trace preserving (CPTP) maps~\cite{NielsenChuang,schmidinitial}) $\{\mathcal{E}_y\}_y$ on his subsystem and Alice measuring with POVMs $\{M_{a|x}\}_{a,x}$. We will refer to such assemblages as \textit{post-quantum} assemblages. 

To certify post-quantumness of a BwI assemblage, one can define an EPR inequality and show that its quantum bound can be violated by this assemblage. Take an arbitrary fixed post-quantum BwI assemblage $\As_{\A|\X\Y}^{*}$ with elements $\{\sigma_{a|xy}^{*}\}_{a,x,y}$. There always exist Hermitian operators $\{F_{axy}\}_{a,x,y}$ such that the EPR functional
\begin{align}\label{eq:general_EPR_functional_bwi}
    \tr{\sum_{a,x,y}F_{axy}\sigma_{a|xy}^{*}}{}<0, 
\end{align}
while $\tr{\sum_{a,x,y}F_{axy}\sigma_{a|xy}}{}\geq0$ when evaluated on any quantum assemblage $\{\sigma_{a|xy}\}_{a,x,y}$.

One particularly interesting set of assemblages in the BwI scenario considered in Ref.~\cite{sainz2015postquantum} consists of post-quantum assemblages that can only lead to quantum correlations if Bob decides to measure his quantum state. Formally, such assemblages with elements $\{\sigma_{a|xy}\}_{a,x,y}$ are such that when Bob performs a measurement $\{N_b\}_b$, the observed correlations $p(ab|xy)=\tr{N_b \sigma_{a|xy}}{}$ belong to the quantum set, i.e., admit a quantum realisation of the form $p(ab|xy)=\tr{N_{a|x} \otimes N_{b|y} \rho}{}$ with $\{N_{a|x}\}_{a,x}$ and $\{N_{b|y}\}_{b,y}$ representing POVMs and $\rho$ being a valid quantum state. This set of assemblages is interesting, because it shows that post-quantum assemblages are a fundamentally different resource than post-quantum correlations in Bell-type scenarios. We denote this set $\As^{QC}$.

In this paper, we propose a protocol in which a bipartite Bob-with-input post-quantum assemblage is distributed in a tripartite network such that post-quantum correlations can be generated in this new network. If the input assemblage belongs to the set $\As^{QC}$, we refer to this phenomenon as activation of post-quantumness, as it enables one to reveal the post-quantum nature of the BwI assemblage in a tripartite Bell-type setting. This result shows that all assemblages from the set $\As^{QC}$ can be utilised to generate post-quantum correlations. As there is no known example of a post-quantum assemblage which does not produce post-quantum correlations, it raises the question of whether there is a fundamental difference between these two resources. 

\subsection{The activation protocol}\label{sec:protocol-bwi}

The activation protocol is illustrated in Fig.~\ref{fig:protocol-bwi}. The experiment consists of three parties: Alice, Bob and Charlie. Alice and Bob share a bipartite Bob-with-input assemblage $\As_{\A|\X\Y}$ whose post-quantumness we want to certify. Moreover, Bob and Charlie share a standard quantum assemblage $\As_{\C|\W}$, with $\C\coloneqq\{0,1\}$ and $\W\coloneqq\{1,2,3\}$. Bob’s quantum subsystem arising in this bipartition is defined on $\cH_{B’}$; here we consider $\cH_{B’}$ of dimension 2 -- generalisation of this case is discussed in the Appendix~\ref{app:higher}. Finally, Bob has a measurement device $\{\M^{BB'}_{b|z}\}_{b,z}$ which measures the system defined on $\cH_{B} \otimes \cH_{B’}$. The input and output sets of this device are the following: $b \in \B \coloneqq \{0,1\}$ and $z \in \Z \coloneqq \{1,2,3,4,\star \}$. Hereon, we assume that the assemblages $\As_{\A|\X\Y}$ and $\As_{\C|\W}$ are independent. Additionally, Alice, Bob and Charlie must be space-like separated.

\begin{figure}[h!]
  \begin{center}
\includegraphics[width=0.4\textwidth]{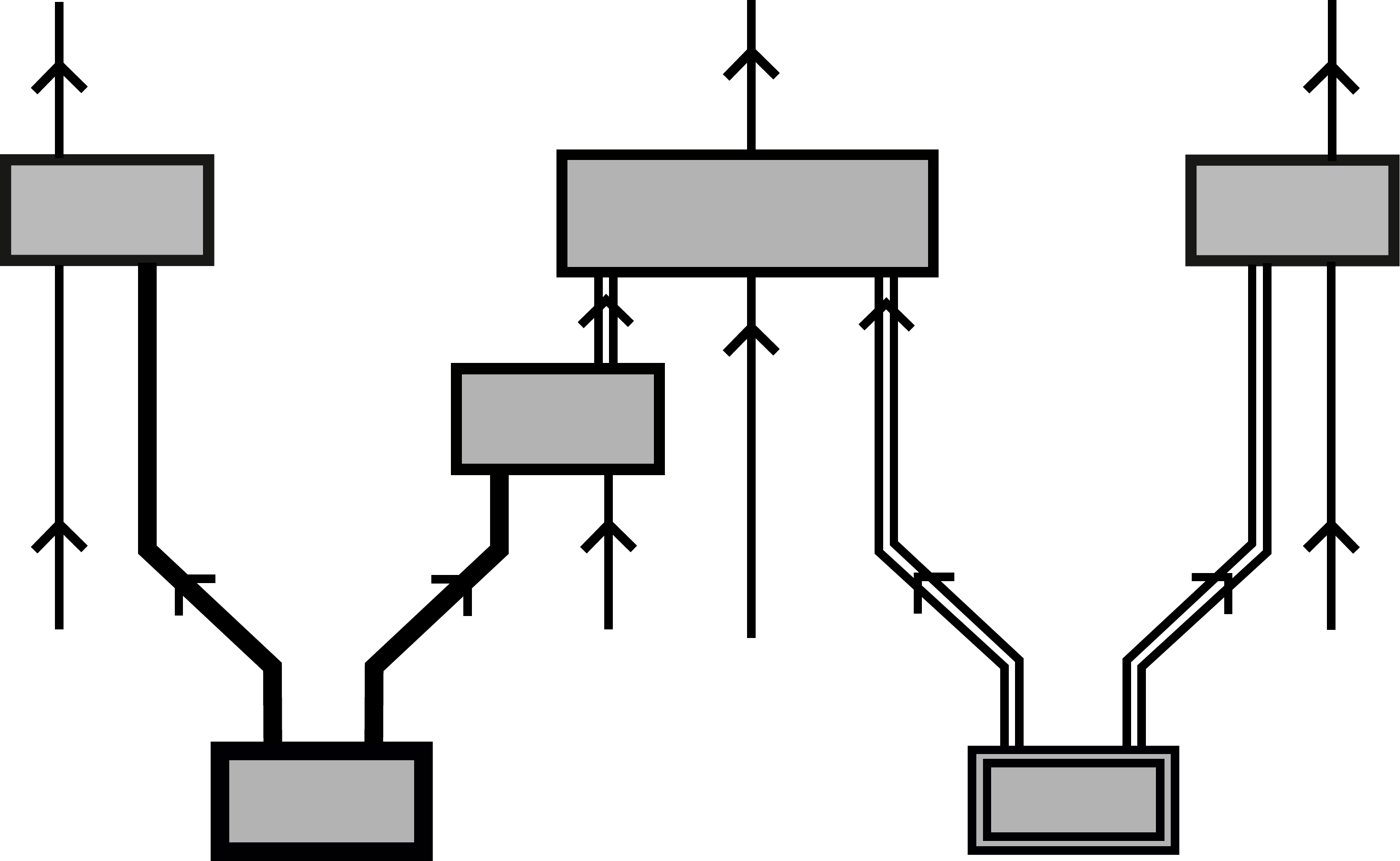}
\put(-200,55){$x$}
\put(-200,110){$a$}
\put(-116,35){$y$}
\put(-127,70){$\cH_{B}$}
\put(-100,110){$b$}
\put(-95,50){$z$}
\put(-60,70){$\cH_{B'}$}
\put(0,110){$c$}
\put(0,55){$w$}
  \end{center}
  \caption{Depiction of the activation protocol for the Bob-with-input scenario. Arbitrary systems (which may be post-quantum) are represented by thick lines, quantum systems are represented by double lines and classical systems are depicted as single lines}\label{fig:protocol-bwi}
\end{figure}
\noindent

If the BwI assemblage $\As_{\A|\X\Y}$ belongs to the set $\As^{QC}$, this protocol can activate its post-quantumness by utilizing the additional quantum resource $\As_{\C|\W}$ to produce post-quantum correlations in the new tripartite network. In the set-up described above, the protocol consists of the following two steps.
\begin{enumerate}[Step 1:]
    \item \textit{Self-testing of $\As_{\C|\W}$.}
\end{enumerate}
The first step of the protocol is to certify that the state of Bob's quantum system $\cH_{B’}$ is prepared in the assemblage $\As^{(r)}_{\C|\W}$ with elements $\sigma^{(r)}_{c|w}=r\widetilde{\sigma}_{c|w}+(1-r)(\widetilde{\sigma}_{c|w})^{T}$. Here, $0 \leq r \leq 1$ is an unknown parameter, and $\widetilde{\sigma}_{c|1}=(\id + (-1)^{c} Z)/4$, $\widetilde{\sigma}_{c|2}=(\id + (-1)^{c} X)/4$, $\widetilde{\sigma}_{c|3}=(\id - (-1)^{c} Y)/4$, with $X,Y,Z$ being the Pauli operators. Self-testing\footnote{Self-testing is a well-defined concept if the states and possible operations are specified within quantum theory. One may wonder if self-testing results are still relevant in our set-up, where we consider more general theories (post-quantum resources). In the specific case of self-testing of a standard bipartite EPR assemblage, post-quantum non-signalling resources are not more powerful than the quantum ones. Due to GHJW theorem~\cite{gisin1989stochastic,hughston1993complete}, all non-signalling assemblages in this standard bipartite scenario admit of a quantum realisation. Hence, using self-testing in this scenario is justified.} of this assemblage is possible due to the result proven in Refs.~\cite{bowles2018self,Chen2021robustselftestingof}. We recall the relevant Bell functional and the choice of measurements $\{M^{B'}_{b|z}\}_{b,z}$ for $z\in\{1,2,3,4\}$ which maximises it in Appendix~\ref{app:ST}.

\begin{enumerate}[Step 2:]
    \item \textit{Certification of post-quantumness.}
\end{enumerate}
The second step of the protocol is to measure the system on $\cH_{B} \otimes \cH_{B’}$ and evaluate an appropriate Bell functional to certify post-quantumness of the observed correlations\footnote{Here, even though the Bell functional is a multipartite one, the corresponding ``quantum bound" that we compute assumes a particular causal structure between the parties: the maximum quantum value is computed by letting the assemblage $\As_{\A|\X\Y}$ run over quantum assemblages and setting the assemblage $\As_{\C|\W}$ to be the one self-tested in the protocol. Notice that this is not what one would usually refer to as the ``quantum bound" of a multipartite Bell inequality, which assumes a common cause is shared among all parties. This is not a problem for our result, however, since the goal of the protocol is to certify that the assemblage $\As_{\A|\X\Y}$ is post-quantum. It is important to notice that $\As_{\C|\W}$ is a standard bipartite assemblage; hence, due to GHJW theorem, it always has a quantum realisation. Therefore, if the correlations observed in the protocol are post-quantum, it is clear that this post-quantumness arises from the post-quantum nature of $\As_{\A|\X\Y}$.}. For now, we only consider the case when Bob's measurement setting and outcome is fixed such that $M^{BB'}_{b=0|z=\star}=\ket{\phi^{+}}\bra{\phi^{+}}$, with $\ket{\phi^{+}}=(\ket{00}+\ket{11})/\sqrt{2}$. Moreover, for simplicity we assume that the elements of the assemblage $\As_{\C|\W}$ are given by $\{\widetilde{\sigma}_{c|w}\}$, not by the mixture $\{\sigma^{(r)}_{c|w}\}$. We denote such assemblage by $\widetilde{\As}_{\C|\W}$. In the Appendix~\ref{app:mixture} and~\ref{app:thm}, we show that these assumptions do not limit our result and the protocol for activation of post-quantumness works when they are relaxed. Under these assumptions, the observed correlations are given by
\begin{align}\label{eq:corr}
    p(a,b=0,c|x,y,z=\star,w)=\tr{M^{BB'}_{b=0|z=\star} (\sigma_{a|xy} \otimes \widetilde{\sigma}_{c|w})}{}.
\end{align}
Using the fact that $\tr{A_1 \otimes \id_2 \ket{\phi^{+}}\bra{\phi^{+}}}{1}=\frac{1}{2}A^{T}$ for any operator $A$, we can write Eq.~\eqref{eq:corr} as
\begin{align}\label{eq:corr2}
    p(a,b=0,c|x,y,z=\star,w)=\frac{1}{2}\tr{(\widetilde{\sigma}_{c|w})^{T} \sigma_{a|xy}}{}.
\end{align}
Given observed correlations, one way to certify their post-quantumness is to evaluate a suitable Bell functional, which we will now derive from the EPR functional specified in Eq.~\eqref{eq:general_EPR_functional_bwi}. For $w \in \{1,2,3\}$, let $\pi_{c|w}$ denote the projector onto the eigenspace of the eigenstates of Pauli $Z$, $X$ and $Y$ operators with eigenvalue $(-1)^{c}$. Notice that $\{ \pi_{c|w}\}_{c,w}$ form a basis of the set of Hermitian matrices in a two-dimensional Hilbert space. Therefore, the operators $\{F_{axy}\}_{a,x,y}$ can be decomposed as $F_{axy}=\sum_{c,w} \xi^{axy}_{cw}  \pi_{c|w}$ for some numbers $\{\xi^{axy}_{cw}\}_{a,x,y,c,w}$. We can use these coefficients to construct the following Bell functional on the correlation $\textbf{\textrm{p}}=\{p(a,b,c|x,y,z,w)\}$:
\begin{align}\label{eq:Bell-functional-bwi}
    I_{BwI}^{*}[\textbf{\textrm{p}}] \equiv \sum_{a,x,y,c,w} \xi^{axy}_{cw} p(a,b=0,c|x,y,z=\star,w).
\end{align}
Here, the coefficients corresponding to $(b,z)$ different than $(0,\star)$ are equal to zero. To calculate the value of this functional, we use the form of correlations given in Eq.~\eqref{eq:corr2}:
\begin{align}
    I_{BwI}^{*}[\textbf{\textrm{p}}] &= \sum_{a,x,y,c,w} \xi^{axy}_{cw} \frac{1}{2}\tr{(\widetilde{\sigma}_{c|w})^{T} \sigma_{a|xy}}{}, \\ \nonumber
&=\frac{1}{4} \sum_{a,x,y} \tr{\left(\sum_{c,w}\xi^{axy}_{cw}\pi_{c|w}\right) \sigma_{a|xy}}{}, \\ \nonumber
&=\frac{1}{4} \tr{\sum_{a,x,y}F_{axy}\sigma_{a|xy}}{}.
\end{align}
Here, we used the fact that the elements $(\widetilde{\sigma}_{c|w})^{T}$ are proportional to the projectors $\pi_{c|w}$. Therefore, $I_{BwI}^{*}[\textbf{\textrm{p}}] \geq 0$ for any correlation $\textbf{\textrm{p}}$ arising from a quantum assemblage $\{\sigma_{a|xy}\}_{a,x,y}$\footnote{Although in principle it is difficult to find operators $\{F_{axy}\}_{a,x,y}$ such that the quantum bound of the EPR functional is exactly $\beta^{Q}=0$, one can use numerical calculations to find a lower bound on $\beta^{Q}$. One possible approximation is an almost-quantum bound, which we denote $\beta^{AQ}$~\cite{sainz2020bipartite}. We show how to use this approximation for the activation protocol in the Appendix~\ref{app:example}, where we present how the protocol works for a specific fixed post-quantum assemblage.}. For the particular choice of measurement $M^{BB'}_{b=0|z=\star}=\ket{\phi^{+}}\bra{\phi^{+}}$, the Bell functional evaluated on the fixed assemblage $\As_{\A|\X\Y}^{*}=\{\sigma_{a|xy}^{*}\}_{a,x,y}$ gives $I_{BwI}^{*}[\textbf{\textrm{p*}}] < 0$. In the Appendix~\ref{app:thm}, we give a detailed analysis of the case when $M^{BB'}_{b=0|z=\star}\neq\ket{\phi^{+}}\bra{\phi^{+}}$ and $I_{BwI}^{*}$ needs to be optimised over all possible measurements. We show that the quantum bound is also given by 0 in this case; hence, if one observes $I_{BwI}^{*}[\textbf{\textrm{p}}] < 0$, it necessarily means that the correlations $\textbf{\textrm{p}}$ were generated from a post-quantum BwI assemblage.

If the first step of the protocol is successful, we can use the construction presented above to certify post-quantumness of the correlations in the tripartite set-up. For each BwI assemblage $\As^{*}_{\A|\X\Y}$, Theorem~\ref{thm-bwi} can be used to derive a tailored Bell functional to be used in the activation protocol. 
\begin{thm}\label{thm-bwi}
There always exists a Bell functional of the form
    \begin{align}
    I_{BwI}^{*}[\textbf{\textrm{p}}]=\frac{1}{2} \sum_{a,x,y,c,w} \xi^{axy}_{cw} p(a,b=0,c|x,y,z=\star,w),
    \end{align}
    for which $I_{BwI}^{*}[\textbf{\textrm{p}}] \geq 0$ when evaluated on any correlations $\textbf{\textrm{p}}=\{p(a,b,c|x,y,z,w)\}$ arising from quantum Bob-with-input assemblage with elements $\{\sigma_{a|xy}\}_{a,x,y}$ as $p(a,b=0,c|x,y,z=\star,w)=\tr{M^{BB'}_{b=0|z=\star} (\sigma_{a|xy} \otimes \widetilde{\sigma}_{c|w})}{}$.
\end{thm}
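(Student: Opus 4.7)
The plan is to reduce the theorem to a single operator-positivity statement on $\cH_B\otimes\cH_{B'}$ and then prove it by testing on an arbitrary pure state, using the qubit structure together with a Stinespring-style dilation of the EPR functional.

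The first step is to substitute the quantum form of the correlations into $I_{BwI}^{*}$ and use the expansion $F_{axy}=\sum_{c,w}\xi^{axy}_{cw}\pi_{c|w}$. The identity $(\widetilde{\sigma}_{c|w})^T=\pi_{c|w}/2$ holds uniformly in $c,w$ (the $w=3$ case uses $Y^T=-Y$ to absorb the sign in the definition of $\widetilde{\sigma}_{c|3}$), and it collapses the Bell functional to
\begin{equation*}
I_{BwI}^{*}[\textbf{\textrm{p}}]\;=\;\tfrac{1}{4}\,\Tr{M^{BB'}_{b=0|z=\star}\cdot X}\,,\qquad X\;:=\;\sum_{a,x,y}\sigma_{a|xy}\otimes F_{axy}^T\,.
\end{equation*}
Because $M^{BB'}_{b=0|z=\star}$ ranges over all POVM elements, and by positive homogeneity over all $M\succeq 0$, the theorem is equivalent to showing $X\succeq 0$ on $\cH_B\otimes\cH_{B'}$ for every quantum Bob-with-input assemblage $\{\sigma_{a|xy}\}$.

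To certify $X\succeq 0$ on qubits I would test it against an arbitrary pure state, which can always be written as $\ket{\psi}=(A\otimes\id_{B'})\ket{\phi^{+}}$ for some $A$ on $\cH_B$. The transpose trick $(\sigma\otimes F^T)\ket{\phi^{+}}=(\sigma F\otimes\id)\ket{\phi^{+}}$ then yields
\begin{equation*}
\bra{\psi}X\ket{\psi}\;=\;\tfrac{1}{2}\sum_{a,x,y}\Tr{F_{axy}\,A^{\dagger}\sigma_{a|xy}A}\,.
\end{equation*}
If $\sigma_{a|xy}=\tr{(N_{a|x}\otimes\mathcal{E}_y)(\rho_{A\tilde B})}{A}$, then $\tau_{a|xy}:=A^{\dagger}\sigma_{a|xy}A$ is obtained from the same state and POVMs by replacing each channel $\mathcal{E}_y$ with the post-composition $\Omega\circ\mathcal{E}_y$, where $\Omega(\cdot)=A^{\dagger}(\cdot)A$ is a single CP map applied \emph{uniformly} across all $(a,x,y)$. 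Thus $X\succeq 0$ reduces to showing that the EPR inequality $\sum\Tr{F_{axy}\tau_{a|xy}}\geq 0$ survives uniform CP post-composition on Bob's side.

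The main obstacle will be proving this extended EPR inequality, since uniform post-composition by $\Omega$ generally produces only a sub-normalised assemblage and the sub-normalised EPR inequality is not automatic. My strategy is a Stinespring-style dilation: complete $\Omega$ to a two-outcome quantum instrument $\{\Omega,\Omega'\}$ whose sum is completely positive and trace preserving, enlarge Bob's output to $\cH_B\oplus\cH_B$ to carry the instrument flag, promote each channel to $(\Omega\oplus\Omega')\circ\mathcal{E}_y$, and lift the EPR operators to the block form $F_{axy}\oplus 0$. The dilated object is a bona fide normalised quantum Bob-with-input assemblage on the enlarged space, and the block-extended EPR operators evaluate back to $\tfrac{1}{2}\sum\Tr{F_{axy}A^{\dagger}\sigma_{a|xy}A}$. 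The delicate point I anticipate needing most care is verifying that $F_{axy}\oplus 0$ is itself a valid EPR inequality on $\cH_B\oplus\cH_B$; I would argue that projecting an arbitrary enlarged-space quantum assemblage onto its $\cH_B$ block returns a sub-normalised assemblage arising from a \emph{uniform} CP post-composition on Bob's side, so that enlarged-space positivity reduces to the same uniform sub-normalised EPR claim, which can be settled by decomposing the post-composed assemblage as a convex combination of normalised quantum assemblages, exploiting the qubit structure of $\cH_B$. Tightness of the quantum bound follows at once from any trivial product quantum assemblage $\sigma_{a|xy}=p(a|x)\,\omega$, for which $I_{BwI}^{*}$ vanishes identically.
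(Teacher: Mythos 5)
Your opening reduction is sound, and it is in essence the same manoeuvre as the paper's Appendix~\ref{app:thm}: using $(\widetilde{\sigma}_{c|w})^{T}=\pi_{c|w}/2$ to collapse the functional to $\frac{1}{4}\Tr{M^{BB'}_{b=0|z=\star}X}$ with $X=\sum_{a,x,y}\sigma_{a|xy}\otimes F_{axy}^{T}$, noting that positivity for all POVM elements is equivalent to $X\succeq 0$, and testing $X$ on $(A\otimes\id)\ket{\phi^{+}}$ to land on $\frac{1}{2}\sum_{a,x,y}\Tr{F_{axy}A^{\dagger}\sigma_{a|xy}A}$ is a clean repackaging of the paper's step of absorbing the unknown measurement, via an extra maximally entangled pair, into a redefined (sub-normalised, $y$-dependent) state $\rho'$. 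You are also right, and commendably explicit, that your Stinespring dilation is circular: everything hinges on the ``uniform sub-normalised EPR inequality'' $\sum_{a,x,y}\Tr{F_{axy}A^{\dagger}\sigma_{a|xy}A}\geq 0$.

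That is exactly where the proposal breaks. The fix you sketch --- decomposing $\{A^{\dagger}\sigma_{a|xy}A\}$ as a convex combination of normalised quantum assemblages --- cannot exist in general: any conic combination of normalised Bob-with-input assemblages has total trace $\sum_{a}$ independent of $(x,y)$, whereas $\Tr{\sum_{a}A^{\dagger}\sigma_{a|xy}A}=\Tr{AA^{\dagger}\mathcal{E}_{y}(\rho_{B})}$ generically depends on $y$. Worse, the filtered inequality is simply not a consequence of the only hypothesis your argument uses, namely that $F$ has quantum bound zero. Take $|\Y|=2$, $F_{axy}=G_{y}/|\X|$ with $G_{0}=\id$ and $G_{1}=\mathrm{diag}(-1,3)$; then $\sum_{a,x,y}\Tr{F_{axy}\sigma_{a|xy}}=\Tr{G_{0}\rho^{(0)}}+\Tr{G_{1}\rho^{(1)}}\geq 1-1=0$ for every (even non-signalling) assemblage, where $\rho^{(y)}=\sum_{a}\sigma_{a|xy}$. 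Yet for the quantum assemblage built from the constant channels $\mathcal{E}_{0}(\cdot)=\Tr{\cdot}\,\ket{1}\bra{1}$, $\mathcal{E}_{1}(\cdot)=\Tr{\cdot}\,\ket{0}\bra{0}$ and the filter $A=\ket{0}\bra{0}$ one finds $\sum_{a,x,y}\Tr{F_{axy}A^{\dagger}\sigma_{a|xy}A}=-1$; equivalently $X=\ket{1}\bra{1}\otimes\id+\ket{0}\bra{0}\otimes\mathrm{diag}(-1,3)$ satisfies $\bra{00}X\ket{00}=-1$, so the POVM element $M^{BB'}_{b=0|z=\star}=\ket{00}\bra{00}$ yields a negative value on a quantum assemblage. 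This particular $F$ witnesses nothing post-quantum, so it does not refute the theorem's existence claim, but it does show that ``$F$ is nonnegative on quantum assemblages'' is insufficient for the step you need: any complete proof must exploit additional structure of (or freedom in choosing) $F$, which neither your dilation nor your decomposition supplies. (Incidentally, your closing tightness remark is also off --- a product assemblage gives $\sum_{a,x,y}p(a|x)\Tr{F_{axy}\omega}$, which need not vanish --- but that is inessential to the theorem.)
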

\noindent
As an explicit example of this method, in the Appendix~\ref{app:example}, we use the construction from Theorem~\ref{thm-bwi} to derive a Bell functional that activates post-quantumness of the assemblage $\As^{PTP}_{\A|\X\Y}$ introduced in Ref.~\cite[Eq.~(6)]{sainz2020bipartite}, which belongs to the set $\As^{QC}$.

The protocol designed in this section enables one to generate post-quantum Bell-type correlations from bipartite BwI assemblages. In particular, it can be used to activate post-quantumness of assemblages from the set $\As^{QC}$. Therefore, although the assemblages from the set $\As^{QC}$ do not exhibit bipartite Bell-type nonclassicality, they can generate post-quantum correlations in a larger network. This result suggests that the fundamental differences between the post-quantumness embedded in Bob-with-input EPR assemblages and in correlations generated in Bell scenarios are subtle, as the former can always generate latter.

\section{Measurement-device-independent scenario}
\subsection{Description of the scenario}

We will now consider a measurement-device-independent (MDI) EPR scenario in which Bob holds a collection of measurement channels that may be correlated with the physical system in Alice's lab~\cite{cavalcanti2013entanglement, channelEPR}. Bob's processing, which we denote by $\Theta_b^{BB_{in} \rightarrow B_{out}}$, takes as inputs a quantum system $\cH_{B_{in}}$ and the subsystem $B$. Bob's output system $B_{out}$ is just a classical variable that may take values within $\B$. Formally, the relevant \textit{MDI assemblages} in this scenario are given by $\N_{\A\B|\X}=~\{\Ne_{ab|x}(\cdot)\}_{a,b,x}$, where $\Ne_{ab|x}(\cdot)$ is a measurement channel -- a quantum instrument with a trivial output Hilbert space -- associated to the POVM element corresponding to outcome $b$ of Bob's measurement device (when Alice's measurement event is $a|x$). 

\begin{defn}[Non-signalling measurement-device-independent assemblages]
An assemblage $\N_{\A\B|\X}$ with elements $\{\Ne_{ab|x}(\cdot)\}_{a,b,x}$ is a valid non-signalling assemblage in a measurement-device-independent EPR scenario if the following constraints are satisfied
\begin{align} 
 & \sum_b \Ne_{ab|x}(\rho) = p(a|x) \quad \forall \, a, x, \rho, \label{eq:mdins1}\\
 & \sum_a \Ne_{ab|x}(\cdot) = \Omega_b^{B_{in} \rightarrow B_{out}}(\cdot) \quad \forall \, b,x \label{eq:mdins2},
\end{align}
where $\rho$ represents any normalised state of the quantum system $B_{in}$ and $\{\Omega_b^{B_{in} \rightarrow B_{out}}(\cdot)\}_{b}$ is a collection of measurement channels with $(\cdot)$ representing the input space $B_{in}$.
\end{defn}
\noindent
The no-signalling conditions \eqref{eq:mdins1} and \eqref{eq:mdins2} imply that Alice's output does not depend on Bob's input state and Alice's classical input cannot influence Bob's output. The elements of a quantumly-realisable MDI assemblage can be written as
\begin{equation}\label{eq:MDI-ass0}
\Ne_{ab|x}(\cdot)=\tr{(M_{a|x} \otimes \id_{B_{out}}) (\id_A \otimes \Theta_b^{BB_{in} \rightarrow B_{out}})[\rho_{AB} \otimes (\cdot)]}{AB_{out}}.
\end{equation}
Notably, there exist non-signalling MDI assemblages that do not admit of a decomposition of the form~\eqref{eq:MDI-ass0}~\cite{schmid2021postquantum,channelEPR}, which we refer to as \textit{post-quantum} MDI assemblages. 

One method for determining whether a measurement-device-independent EPR assemblage is post-quantum involves defining an EPR functional that it violates with respect to a quantum bound. Take an arbitrary fixed post-quantum MDI assemblage $\N_{\A\B|\X}^{*}$ with elements $\{\Ne^{*}_{ab|x}(\cdot)\}_{a,b,x}$. There always exist Hermitian operators $\{F_{abx}\}_{a,b,x}$ such that the EPR functional
\begin{align}\label{eq:general_EPR_functional_mdi}
    \tr{\sum_{a,b,x}F_{abx}\, J(\Ne^{*}_{ab|x})}{}<0, 
\end{align}
while $ \tr{\sum_{a,b,x}F_{abx} \, J(\Ne_{ab|x})}{}\geq0$ when evaluated on any quantum assemblage $\N_{\A\B|\X}$.

Moreover, there exist MDI assemblages which can only generate quantum correlations $p(ab|xy) = \Ne_{ab|x}(\rho_y)$ given a set of
fixed input states $\{\rho_y\}_y$. We denote the set of such assemblages by $\N^{QC}$. We give an example of such behaviour in the Appendix~\ref{app:mdiass}, where we show that a post-quantum assemblage introduced in Ref.~\cite{channelEPR} cannot generate post-quantum correlations in a bipartite Bell-type scenario. 

We now propose a protocol which certifies post-quantumness of MDI assemblages. Similarly to the previous section, we show that although existence of the set $\N^{QC}$ seems to imply that post-quantum assemblages are fundamentally different resources than post-quantum Bell nonclassicality, it is not the case. If the MDI assemblage in the protocol belongs to the set $\N^{QC}$, its post-quantumness can be activated, as it can generate post-quantum correlations in a tripartite Bell-like network.

\subsection{The activation protocol}

The experiment for the activation protocol consists of three parties – Alice, Bob and Charlie, as illustrated in Fig.~\ref{fig:protocol-mdi}. Alice and Bob share a bipartite measurement-device-independent assemblage $\N_{\A\B|\X}$. Bob's input state, which is on the Hilbert space $\cH_{B_{in}}$, is prepared by Charlie through a standard quantum assemblage $\As_{\C|\Z}$, with $\C\coloneqq\{0,1\}$ and $\W\coloneqq\{1,2,3\}$. That is, Bob and Charlie share a system $B_{in}C$ and Charlie performs measurements $\{M^{C}_{c|z}\}_{c,z}$, which result in assemblage elements $\{\sigma_{c|z}\}_{c,z}$. Finally, Bob has a measurement device $\{\M^{BB_{in}}_{b|y}\}_{b,y}$ with input and output sets $b \in \B$ and $y \in \Y \coloneqq \{1,2,3,4,\star \}$.

\begin{figure}[h!]
  \begin{center}
\includegraphics[width=0.4\textwidth]{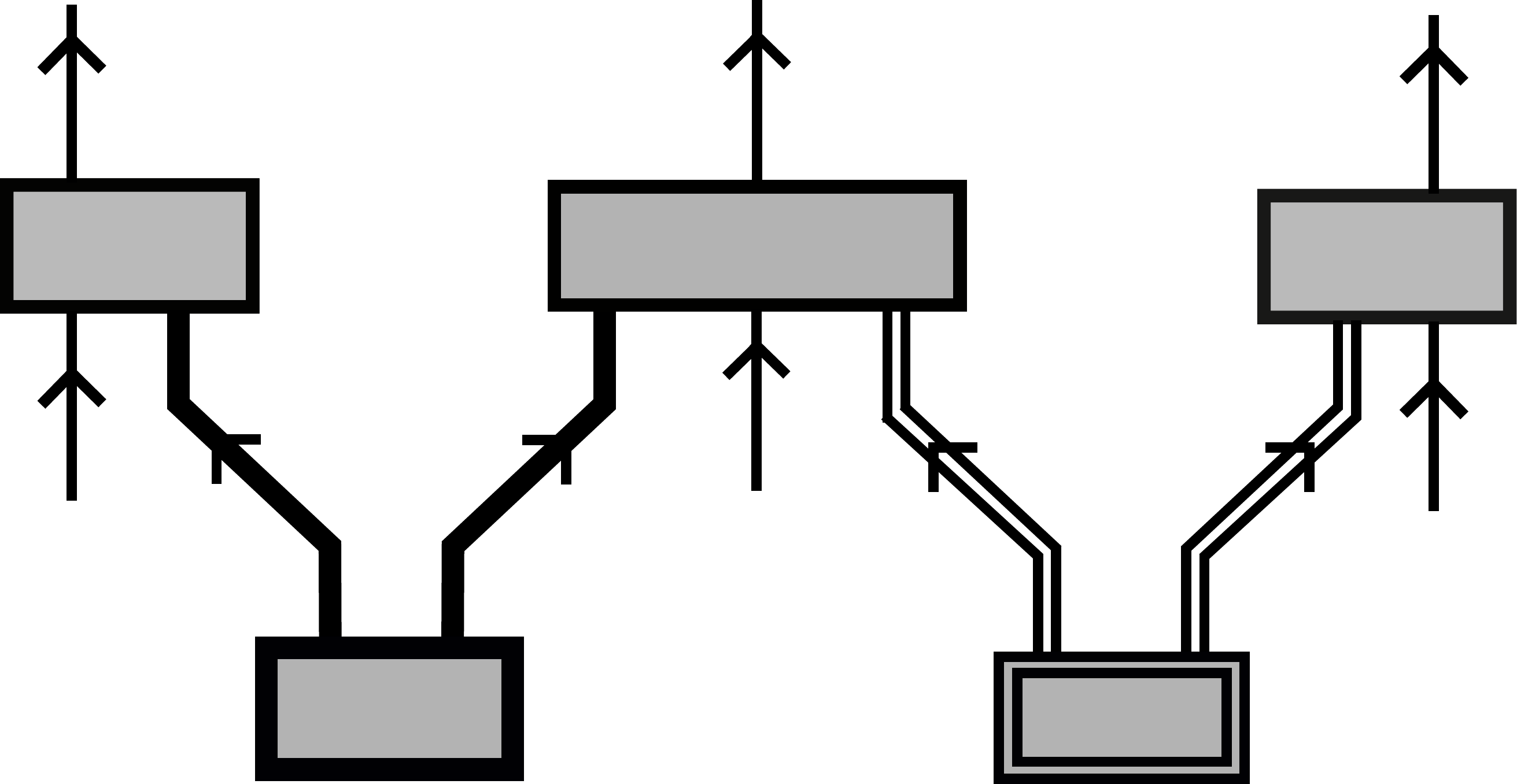}
\put(-197,38){$x$}
\put(-197,90){$a$}
\put(-106,90){$b$}
\put(-106,38){$y$}
\put(-70,50){$\cH_{B_{in}}$}
\put(0,90){$c$}
\put(0,38){$z$}
  \end{center}
  \caption{Depiction of the activation protocol for the measurement-device-independent scenario. Quantum systems are represented by double lines, while classical systems are depicted as single lines. Thick lines depict the possibility that the shared systems may be classical, quantum, or post-quantum.}\label{fig:protocol-mdi}
\end{figure}
\noindent
In the set-up of Fig.~\ref{fig:protocol-mdi}, $\N_{\A\B|\X}$ and $\As_{\C|\Z}$ are two independent assemblages. This can be guaranteed by preparing the systems $AB$ and
$B_{in}C$ separately. Additionally, Alice, Bob and Charlie must
be space-like separated.

The protocol for the MDI scenario has the same structure as the protocol for the BwI scenario introduced in the previous section. In the set-up described above, the protocol consists of the following two steps.
\begin{enumerate}[Step 1:]
    \item \textit{Self-testing of $\As_{\C|\Z}$.}
\end{enumerate}
For $y \in \{1,2,3,4\}$, Bob measures the quantum system $\cH_{B’}$ for the self-testing procedure that is equivalent to Step 1 in the Bob-with-input activation protocol. The goal of this step is to certify that Bob and Charlie share an assemblage $\As^{(r)}_{\C|\Z}$ with elements $\sigma^{(r)}_{c|z}=r\widetilde{\sigma}_{c|z}+(1-r)(\widetilde{\sigma}_{c|z})^{T}$. Details of the self-testing procedure are specified in the Appendix~\ref{app:ST}.

\begin{enumerate}[Step 2:]
    \item \textit{Certification of post-quantumness.}
\end{enumerate}
In the second step of the protocol, for $y=\star$, Bob applies the process $\Theta_b^{BB_{in} \rightarrow B_{out}}$ that defines the assemblage $\N_{\A\B|\X}$. Then, it is possible to certify the post-quantumness of the correlations $p(a,b,c|x,y=\star,z)$
via an appropriate Bell functional.

Assume that the first step of the protocol was successful. Here, we again assume that the elements of the assemblage Bob and Charlie share are $\{\widetilde{\sigma}_{c|z}\}_{c,z}$, not $\{\sigma^{(r)}_{c|z}\}_{c,z}$ (see the Appendix~\ref{app:mixture} for discussion about this assumption). Then, the observed correlations can be written as
\begin{align}\label{eq:corr-MDI-1}
p(a,b,c|x,y=\star,z)=\Ne_{ab|x}(\widetilde{\sigma}_{c|z}).
\end{align}
Now, let us use Choi-Jamiołkowski isomorphism to transform the channel $\Ne_{ab|x}(\cdot)$ to an operator. Define $J(\Ne_{ab|x})=(\Ne_{ab|x} \otimes \id^{C})(\ket{\phi^{+}}\bra{\phi^{+}}) _{B_{in}C}$. Moreover, notice that the elements $\widetilde{\sigma}_{c|z}$ can be written as $\widetilde{\sigma}_{c|z}=\tr{(\id^{B_{in}} \otimes \widetilde{M}^{C}_{c|z}) \ket{\phi^{+}}\bra{\phi^{+}} _{B_{in}C}}{C}$, with $\widetilde{M}^{C}_{c|1}=(\id + (-1)^{c} Z)/2$, $\widetilde{M}^{C}_{c|2}=(\id + (-1)^{c} X)/2$ and $\widetilde{M}^{C}_{c|3}=(\id + (-1)^{c} Y)/2$. Then, correlation~\eqref{eq:corr-MDI-1} can be written as
\begin{align}\label{eq:corr-MDI-2}
    p(a,b,c|x,y=\star,z)=\tr{\widetilde{M}^{C}_{c|z} \, J(\Ne_{ab|x})}{}.
\end{align}
To derive a suitable Bell functional for the considered set-up, we will use the EPR functional defined in Eq.~\eqref{eq:general_EPR_functional_mdi}. Let $\pi_{c|z}$ denote the projector onto the eigenspace of the eigenstates of Pauli $Z$,$X$ and $Y$ operators with eigenvalue $(-1)^{c}$. Then, the operators $\{F_{abx}\}_{a,b,x}$ can be decomposed as $F_{abx}=\sum_{c,z} \xi^{abx}_{cz}  \pi_{c|z}$ for some real numbers $\{\xi^{abx}_{cz}\}_{a,b,x,c,z}$. To certify post-quantum correlations in this protocol, it suffices to consider the following Bell functional on the correlation $\textbf{\textrm{p}}=\{p(a,b,c|x,y=\star,z)\}$:
\begin{align}\label{eq:Bell-functional-MDI}
    I_{MDI}^{*}[\textbf{\textrm{p}}] \equiv \sum_{a,b,x,c,z} \xi^{abx}_{cz} p(a,b,c|x,y=\star,z).
\end{align}
To calculate the value of this functional, we use the form of correlations given in Eq.~\eqref{eq:corr-MDI-2}:
\begin{align}
    I_{MDI}^{*}[\textbf{\textrm{p}}] &= \sum_{a,b,x,c,z} \xi^{abx}_{cz} \tr{ \widetilde{M}^{C}_{c|z} \, J(\Ne_{ab|x})}{}, \\ \nonumber
&=\sum_{a,b,x} \tr{\left(\sum_{c,z} \xi^{abx}_{cz}  \pi_{c|z}\right) \, J(\Ne_{ab|x})}{}, \\ \nonumber
&=\tr{\sum_{a,b,x}F_{abx}\, J(\Ne_{ab|x})}{}.
\end{align}
Here, we used the fact that the measurement elements $\widetilde{M}^{C}_{c|z}$ correspond to the projectors $\pi_{c|z}$. If $\textbf{\textrm{p}}$ was generated from a quantum MDI assemblage, then $I_{MDI}^{*}[\textbf{\textrm{p}}] \geq 0$. For the fixed post-quantum assemblage $\{\Ne^{*}_{ab|x}(\cdot)\}_{a,b,x}$, the functional evaluates to $I_{MDI}^{*}[\textbf{\textrm{p*}}] < 0$. This result is summarised in the theorem below.
\begin{thm}\label{thm-mdi}
There always exists a Bell functional of the form
    \begin{align}
    I_{MDI}^{*}[\textbf{\textrm{p}}] &= \sum_{a,b,x,c,z} \xi^{abx}_{cz} p(a,b,c|x,y=\star,z),
    \end{align}
    for which $I_{MDI}^{*}[\textbf{\textrm{p}}] \geq 0$ when evaluated on any correlations $\textbf{\textrm{p}}=\{p(a,b,c|x,y,z)\}$ arising from quantum measurement-device-independent assemblage with elements $\{\Ne_{ab|x}(\cdot)\}_{a,b,x}$ as $p(a,b,c|x,y=\star,z) = \Ne_{ab|x}(\widetilde{\sigma}_{c|z})$.
\end{thm}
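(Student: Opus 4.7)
The plan is to follow essentially the same template used for Theorem~\ref{thm-bwi}, with the Choi--Jamiołkowski isomorphism playing the role of the maximally entangled state trick that appeared in the Bob-with-input case. The starting point is to take the fixed post-quantum MDI assemblage $\N_{\A\B|\X}^{*}$ and invoke Eq.~\eqref{eq:general_EPR_functional_mdi} to obtain a collection of Hermitian operators $\{F_{abx}\}_{a,b,x}$ on $\cH_{B_{in}}$ witnessing post-quantumness, i.e.\ $\tr{\sum_{a,b,x}F_{abx}\,J(\Ne^{*}_{ab|x})}{}<0$ while being non-negative on quantum MDI assemblages.

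Next, I would expand each $F_{abx}$ in the basis furnished by the Pauli eigenprojectors. Since $\{\pi_{c|z}\}_{c,z}$ (with $z\in\{1,2,3\}$, $c\in\{0,1\}$) spans the real vector space of Hermitian operators on $\C^2$, one has a decomposition $F_{abx}=\sum_{c,z}\xi^{abx}_{cz}\pi_{c|z}$ for some real coefficients $\xi^{abx}_{cz}$. These coefficients define the candidate Bell functional stated in the theorem. The key observation to connect the EPR and Bell pictures is the identity $\widetilde{\sigma}_{c|z}=\tr{(\id^{B_{in}}\otimes\widetilde{M}^{C}_{c|z})\ket{\phi^{+}}\!\bra{\phi^{+}}_{B_{in}C}}{C}$ recalled just before Eq.~\eqref{eq:corr-MDI-2}, together with the definition of the Choi matrix $J(\Ne_{ab|x})$.

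Using these ingredients, the main step is the short computation already carried out between Eq.~\eqref{eq:Bell-functional-MDI} and the theorem: substituting Eq.~\eqref{eq:corr-MDI-2} into $I_{MDI}^{*}[\textbf{\textrm{p}}]$ and using the fact that $\widetilde{M}^{C}_{c|z}=\pi_{c|z}$ collapses the sum to $\tr{\sum_{a,b,x}F_{abx}\,J(\Ne_{ab|x})}{}$. Whenever $\textbf{\textrm{p}}$ arises from a quantum MDI assemblage, the Choi operators $J(\Ne_{ab|x})$ correspond to a valid quantum MDI assemblage, and so the expression is non-negative by the defining property of $\{F_{abx}\}$. Conversely, plugging in the assemblage $\N_{\A\B|\X}^{*}$ reproduces the strictly negative EPR functional value, so the Bell functional separates the quantum set from the correlations produced by the post-quantum assemblage, as claimed.

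The bulk of the argument is thus a bookkeeping exercise once one has the Pauli-basis expansion; the only genuinely non-trivial point is verifying that the ``quantum bound'' claim really does hold uniformly over all quantum MDI assemblages. This hinges on the observation that the Choi-map sending $\Ne_{ab|x}\mapsto J(\Ne_{ab|x})$ preserves the quantum-realisability condition in Eq.~\eqref{eq:MDI-ass0}, so the set over which we minimise the Bell functional is exactly the image of the quantum MDI set under Choi, on which the EPR functional is non-negative by construction. As in the BwI case, the hypothesis that Bob's measurement on $\cH_B\otimes\cH_{B_{in}}$ is fixed (implicitly built into the identification $p(a,b,c|x,y=\star,z)=\Ne_{ab|x}(\widetilde{\sigma}_{c|z})$) avoids the additional optimisation over Bob's measurement that was treated in Appendix~\ref{app:thm}; the corresponding generalisation for the MDI scenario should be relegated to an analogous appendix. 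The main obstacle, as in the Bob-with-input case, is establishing the quantum bound: concretely, showing that one can take the infimum of the EPR functional over quantum MDI assemblages to be exactly $0$ rather than a strictly negative quantity, which in practice may require the almost-quantum relaxation discussed for the BwI scenario.
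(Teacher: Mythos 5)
Your proposal is correct and follows essentially the same route as the paper: expand each $F_{abx}$ in the Pauli-eigenprojector basis $\{\pi_{c|z}\}$, use $p(a,b,c|x,y=\star,z)=\Ne_{ab|x}(\widetilde{\sigma}_{c|z})=\tr{\widetilde{M}^{C}_{c|z}\,J(\Ne_{ab|x})}{}$ to collapse $I_{MDI}^{*}[\textbf{\textrm{p}}]$ to $\tr{\sum_{a,b,x}F_{abx}\,J(\Ne_{ab|x})}{}$, and then invoke the non-negativity of the EPR functional on quantum MDI assemblages together with its strict negativity on $\N^{*}_{\A\B|\X}$. Your closing caveats (no extra measurement optimisation needed here since Bob's output is classical, and the quantum bound being set to $0$ possibly via an almost-quantum relaxation) match the assumptions the paper itself makes.
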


To summarise, here we introduced a protocol in which a post-quantum MDI assemblage is processed to generate post-quantum Bell-type correlations. In analogy to the protocol for the BwI scenario, even assemblages from the set $\N^{QC}$ can be activated to generate post-quantum correlations when distributed in a larger network. It shows that there exists a relation between post-quantumness in measurement-device-independent EPR scenarios and device-independent post-quantum correlations.

\section{Channel EPR scenario}
\subsection{Description of the scenario}
In the channel EPR scenario~\cite{piani2015channel,channelEPR}, Bob has access to a channel with a quantum input defined on $\mathcal{H}_{B_{in}}$ and a quantum output defined on $\mathcal{H}_{B_{out}}$. Bob's local process, which we denote by $\Gamma^{BB_{in} \rightarrow B_{out}}$, might be correlated with Alice's system through the system $B$. Denote by $\I_{a|x}(\cdot)$ the instrument that is effectively applied on Bob's quantum system $B_{in}$ to output a quantum system $B_{out}$, given that Alice has performed a measurement $x$ on $A$ and obtained the outcome $a$. Then, the object of study in a channel EPR scenario is the \emph{channel assemblage} of instruments $\In_{\A|\X}= \{\I_{a|x}(\cdot)\}_{a,x}$.

\begin{defn}[Non-signalling channel assemblages]
An assemblage $\In_{\A|\X}$ with elements $\{\I_{a|x}(\cdot)\}_{a,x}$ is a valid non-signalling assemblage in a channel EPR scenario if the following constraints are satisfied
\begin{align}
 &\tr{\I_{a|x}(\rho)}{} = p(a|x) \quad \forall \, a, x, \rho, \label{eq:chns1}\\
 &\sum_{a} \I_{a|x}(\cdot) = \Lambda^{B_{in} \rightarrow B_{out}} (\cdot)\quad \forall \, x \label{eq:chns2},
\end{align}
where $\rho$ represents any normalised state of the quantum system $B_{in}$ and $\Lambda^{B_{in} \rightarrow B_{out}}(\cdot)$ is a quantum channel with $(\cdot)$ representing an input defined on $B_{in}$.
\end{defn}
\noindent
The no-signalling conditions \eqref{eq:chns1} and \eqref{eq:chns2} ensure that Alice's output does not depend on Bob's input state and Alice's classical input cannot influence Bob's output state.

When Alice and Bob share a bipartite quantum system prepared on a state $\rho_{AB}$ and Alice performs POVM measurements $\{M_{a|x}\}_{a,x}$, the elements of a channel assemblage admit quantum realisation of the form:
\begin{equation}\label{eq:channel-ass}
\I_{a|x}(\cdot)=\tr{(M_{a|x} \otimes \id_{B_{out}}) (\id_A \otimes \Gamma^{BB_{in} \rightarrow B_{out}})[\rho_{AB} \otimes (\cdot)_{B_{in}}]}{A}.
\end{equation}
For each measurement input $x$, the instruments $\{\I_{a|x}\}_{a,x}$ form a channel which does not depend on Alice's measurement choice, i.e., $\sum_{a\in\A} \I_{a|x}$ is a CPTP map. In general, however, there exists post-quantum non-signalling channel assemblages. 

To certify post-quantumness of an arbitrary but fixed post-quantum channel assemblage $\In_{\A|\X}^{*}$ with elements $\{\I_{a|x}^{*}\}$ one can use a channel EPR functional similar to the functional introduced for the BwI and MDI scenarios. The channel EPR functional is given by
\begin{align}\label{eq:general_EPR_functional_channel}
    \tr{\sum_{a,x}F_{ax}J(\I_{a|x}^{*})}{}<0, 
\end{align}
where$\tr{\sum_{a,x,y}F_{ax}J(\I_{a|x})}{}\geq0$ when evaluated on any quantum assemblage with elements $\{\I_{a|x}(\cdot)\}_{a,x}$. 

Moreover, there exist post-quantum assemblages $\In_{\A|\X}$ that only generate quantum correlations $\{p(ab|xy)\}$ when the input systems are fixed to be $\{\rho_y\}_y$ and Bob measures his output with a measurement $\{N_b\}_b$, i.e., $p(ab|xy) = \tr{N_b(\I_{a|x}(\rho_y))}{}$ has a quantum realisation. An example of such assemblage is given in Ref.~\cite[Eq.~(12)]{channelEPR}\footnote{A post-quantum channel assemblages introduced in Ref.~\cite[Eq.~(12)]{channelEPR} has a very similar structure to the MDI assemblage which we discuss in the Appendix~\ref{app:mdiass}; the proof that this assemblage can only generate quantum correlations $\{p(ab|xy)\}$ is analogous to the one presented there.}. We denote this set of assemblages $\In^{QC}$. 

Here we show how to embed a bipartite channel assemblage belonging to the set $\In^{QC}$ in a larger network such that post-quantum correlations are generated in this new set-up. The protocol combines the techniques introduced for the BwI and MDI scenarios in the previous sections. In particular, Bob's quantum input is generated in the same way as Bob's input in the protocol for the MDI scenario and Bob's quantum output is measured in the same manner as Bob's output in the protocol for the BwI scenario. 

\subsection{The activation protocol}

The experiment consists of four parties: Alice, Bob, Charlie and Dani, as illustrated in Fig~\ref{fig:protocol-channel}. Alice and Bob share a channel assemblage $\In_{\A|\X}$. Bob's input to his local channel is defined on the Hilbert space $\cH_{B_{in}}$. It is prepared by Charlie through a standard quantum assemblage $\As_{\C|\Z}$, with $\C\coloneqq\{0,1\}$ and $\W\coloneqq\{1,2,3\}$. Additionally, Bob has a classical input $y \in \Y \coloneqq \{1,2,3,4,\star,\lozenge,\blacklozenge \}$ which controls this stage of the protocol. Moreover, Bob and Dani share a standard quantum assemblage $\As_{\D|\U}$, with $\D\coloneqq\{0,1\}$ and $\U\coloneqq\{1,2,3\}$, where Bob's system is defined on $\cH_{B'}$. Finally, Bob has a measurement device $\{\M^{B_{out}B'}_{b|z}\}_{b,z}$ which measures the system defined on $\cH_{B_{out}} \otimes \cH_{B’}$. The input and output sets of this device are the following: $b \in \B \coloneqq \{0,1\}$ and $z \in \Z \coloneqq \{1,2,3,4,\star, \lozenge,\blacklozenge \}$. Throughout this section, we assume that the assemblages $\In_{\A|\X}$, $\As_{\C|\Z}$ and $\As_{\D|\U}$ are independent, i.e., the systems $AB$, $B_{in}C$ and $B'D$ are prepared separately and the four parties are space-like separated. Then, the activation protocol is the following.

\begin{figure}[h!]
  \begin{center}
\includegraphics[width=0.48\textwidth]{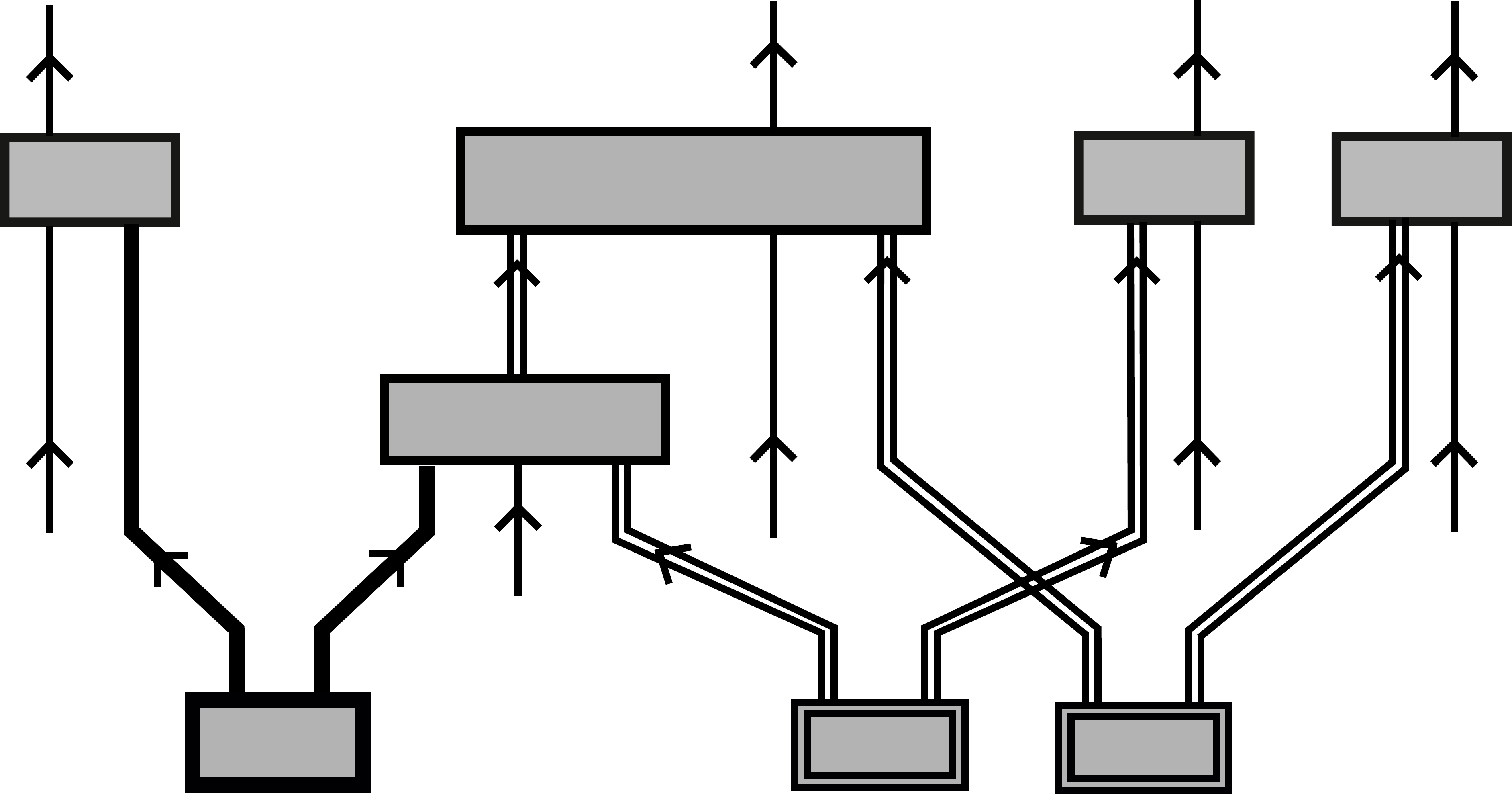}
\put(-235,55){$x$}
\put(-235,110){$a$}
\put(-158,33){$y$}
\put(-177,75){$\cH_{B_{out}}$}
\put(-122,110){$b$}
\put(-122,70){$z$}
\put(-130,40){$\cH_{B_{in}}$}
\put(-88,75){$\cH_{B'}$}
\put(-40,110){$c$}
\put(-40,55){$w$}
\put(0,110){$d$}
\put(0,55){$u$}
  \end{center}
  \caption{Depiction of the activation protocol for the channel EPR scenario. Systems that may be classical, quantum, or even
post-quantum, are represented by thick lines. Quantum systems are represented by double lines and classical
systems are depicted as single lines.}\label{fig:protocol-channel}
\end{figure}
\begin{enumerate}[Step 1:]
    \item \textit{Self-testing of $\As_{\C|\W}$ and $\As_{\D|\U}$.}
\end{enumerate}
The first step of the protocol is to certify that the state of Bob's quantum systems $\cH_{B_{in}}$ and $\cH_{B’}$ are both independently prepared in the assemblages $\As^{(r_1)}_{\C|\W}$ and $\As^{(r_2)}_{\D|\U}$. Using Bob's inputs $y \in \{1,2,3,4 \}$ and $z \in \{1,2,3,4 \}$, the elements of the assemblages need to be self-tested to be $\sigma^{(r_1)}_{c|w}=r_1\widetilde{\sigma}_{c|w}+(1-r_1)(\widetilde{\sigma}_{c|w})^{T}$ and $\sigma^{(r_2)}_{d|u}=r_2\widetilde{\sigma}_{d|u}+(1-r_2)(\widetilde{\sigma}_{d|u})^{T}$. Then, these assemblages must be aligned such that their tensor product can be written as $\{r (\widetilde{\sigma}_{c|w} \otimes \widetilde{\sigma}_{d|u})+(1-r)(\widetilde{\sigma}_{c|w} \otimes \widetilde{\sigma}_{d|u})^{T}\}$. The alignment procedure, which relies on the measurement settings $y \in \{\lozenge,\blacklozenge \}, z \in \{\lozenge,\blacklozenge \}$, is discussed in the Appendix~\ref{app:ST-channel}.

\begin{enumerate}[Step 2:]
    \item \textit{Certification of post-quantumness.}
\end{enumerate}
In the second step of the protocol, when $y=z=\star$, Bob applies the process $\Gamma^{BB_{in} \rightarrow B_{out}}$ and measures the system on $\cH_{B_{out}} \otimes \cH_{B’}$. Then, the post-quantumness of the correlations $p(a,b=0,c,d|x,y=\star,z=\star,w,u)$ can be tested with a tailored Bell functional. For now, we only consider the case when Bob's measurement is given by $M^{B_{out}B'}_{b=0|z=\star}=\ket{\phi^{+}}\bra{\phi^{+}}$ and the elements of the assemblages $\As_{\C|\W}$ and $\As_{\D|\U}$ are given by $\{\widetilde{\sigma}_{c|w}\}$ and $\{\widetilde{\sigma}_{d|u}\}$, respectively. The activation protocol also works without these assumptions, as we discuss in the Appendix~\ref{app:mixture} and ~\ref{app:thm}. Under these assumptions, the observed correlations are given by
\begin{align}\label{eq:corr-channel}
    p(a,b=0,c,d|x,y=\star,z=\star,w,u)=\tr{M^{B_{out}B'}_{b=0|z=\star} (\I_{a|x}(\widetilde{\sigma}_{c|w}) \otimes \widetilde{\sigma}_{d|u})}{}.
\end{align}
Due to Choi-Jamiołkowski isomorphism, the instrument $\I_{a|x}(\cdot)$ can be expressed as an operator $J(\I_{a|x})=(\I_{a|x} \otimes \id^{C})(\ket{\phi^{+}}\bra{\phi^{+}}) _{B_{in}C}$. Using the techniques introduced for the BwI and MDI protocols, we can rewrite Eq.~\eqref{eq:corr-channel} as
\begin{align}\label{eq:corr2-channel}
    p(a,b=0,c,d|x,y=\star,z=\star,w,u)=\frac{1}{2}\tr{((\widetilde{\sigma}_{d|u})^{T} \otimes \widetilde{M}_{c|w} ) J(\I_{a|x})}{}.
\end{align}
We will now transform the EPR functional given in Eq.~\eqref{eq:general_EPR_functional_channel} into a Bell functional. For $u \in \{1,2,3\}$, let $\pi_{d|u}$ denote the projector onto the eigenspace of the eigenstates of Pauli $Z$, $X$ and $Y$ operators with eigenvalue $(-1)^{d}$. The projectors $\{ \pi_{c|w} \otimes \pi_{d|u}\}$ form a basis of the set of Hermitian matrices in a four-dimensional Hilbert space. Then, the operators $\{F_{ax}\}_{a,x}$ can be decomposed as $F_{ax}=\sum_{c,d,w,u} \xi^{ax}_{cdwu}  ( \pi_{c|w} \otimes \pi_{d|u})$ for some real coefficients $\{\xi^{ax}_{cdwu}\}_{a,x,c,d,w,u}$. Construct the following Bell functional on the correlation $\textbf{\textrm{p}}=\{p(a,b,c,d|x,y,z,w,u)\}$:
\begin{align}\label{eq:Bell-functional}
    I^{*}[\textbf{\textrm{p}}] \equiv \sum_{a,x,c,d,w,u} \xi^{ax}_{cdwu} p(a,b=0,c,d|x,y=\star,z=\star,w,u).
\end{align}
To calculate the value of this functional, take the correlations given in Eq.~\eqref{eq:corr2-channel}:
\begin{align}
    I^{*}[\textbf{\textrm{p}}] &= \sum_{a,x,c,d,w,u} \xi^{ax}_{cdwu} \frac{1}{2}\tr{((\widetilde{\sigma}_{d|u})^{T} \otimes \widetilde{M}_{c|w} ) J(\I_{a|x})}{}, \\ \nonumber
&=\frac{1}{4} \sum_{a,x} \tr{\left(\sum_{c,d,w,u}\xi^{ax}_{cdwu}( \pi_{c|w} \otimes \pi_{d|u}) \right) J(\I_{a|x})}{}, \\ \nonumber
&=\frac{1}{4} \tr{\sum_{a,x}F_{ax}J(\I_{a|x})}{}.
\end{align}
Similarly to the BwI and MDI scenarios, $I^{*}[\textbf{\textrm{p}}] \geq 0$ for any $\textbf{\textrm{p}}$ arising from a quantum channel assemblage. For the measurement $M^{BB'}_{b=0|z=\star}=\ket{\phi^{+}}\bra{\phi^{+}}$ and the fixed assemblage $\In_{\A|\X}^{*}=\{\I_{a|x}^{*}(\cdot)\}$, the Bell functional evaluates to $I^{*}[\textbf{\textrm{p*}}] < 0$. The analysis of the case $M^{BB'}_{b=0|z=\star}\neq\ket{\phi^{+}}\bra{\phi^{+}}$ is the same as in the Bob-with-input scenario. Therefore, if one observes $I^{*}[\textbf{\textrm{p}}] < 0$, it certifies that the correlations $\textbf{\textrm{p}}$ were generated from a post-quantum channel assemblage.
\begin{thm}\label{thm-channel}
There always exists a Bell functional of the form
    \begin{align}\label{eq:functional-channel}
    I^{*}[\textbf{\textrm{p}}] &= \sum_{a,x,c,d,w,u} \xi^{ax}_{cdwu} p(a,b=0,c,d|x,y=\star,z=\star,w,u) 
    \end{align}
    for which $I^{*}[\textbf{\textrm{p}}] \geq 0$ when evaluated on any correlations $\textbf{\textrm{p}}=\{p(a,b,c,d|x,y,z,w,u)\}$ arising from quantum channel assemblage with elements $\{\I_{a|x}(\cdot)\}$ as $p(a,b=0,c,d|x,y=\star,z=\star,w,u)= \tr{M^{B_{out}B'}_{b=0|z=\star} (\I_{a|x}(\widetilde{\sigma}_{c|w}) \otimes \widetilde{\sigma}_{d|u})}{}$.
\end{thm}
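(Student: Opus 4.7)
The plan is to mirror the proof strategy of Theorems~\ref{thm-bwi} and~\ref{thm-mdi}: expand the operators appearing in the channel EPR functional of Eq.~\eqref{eq:general_EPR_functional_channel} in a Pauli-projector basis and identify the resulting coefficients as the $\xi^{ax}_{cdwu}$ in Eq.~\eqref{eq:functional-channel}. The new feature here is that $F_{ax}$ acts on the four-dimensional space $\cH_{B_{in}}\otimes\cH_{B_{out}}$, so the basis must be the tensor product $\{\pi_{c|w}\otimes\pi_{d|u}\}_{c,d,w,u}$ coming from the two independently self-tested qubit assemblages.

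First I would invoke the hypothesis: since $\In^{*}_{\A|\X}$ is post-quantum, there exist Hermitian operators $\{F_{ax}\}_{a,x}$ with $\tr{\sum_{a,x} F_{ax}\,J(\I_{a|x})}{}\geq 0$ on every quantum channel assemblage and $\tr{\sum_{a,x}F_{ax}\,J(\I^{*}_{a|x})}{}<0$. I then decompose each $F_{ax}=\sum_{c,d,w,u}\xi^{ax}_{cdwu}\,(\pi_{c|w}\otimes\pi_{d|u})$ in the tensor-product projector basis, which spans the Hermitian matrices on $\C^{2}\otimes\C^{2}$. These coefficients directly define the Bell functional $I^{*}[\textbf{\textrm{p}}]$ in Eq.~\eqref{eq:functional-channel}.

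The core calculation then follows the pattern already laid out in the discussion preceding the theorem: starting from the observed correlations in Eq.~\eqref{eq:corr-channel}, one uses the Choi--Jamio{\l}kowski identity exactly as in the MDI protocol to turn Charlie's half of the self-tested state into a measurement on the Choi matrix of $\I_{a|x}$, and the Bell-state partial-trace identity $\tr{(A\otimes\id)\ket{\phi^{+}}\!\bra{\phi^{+}}}{1}=\tfrac{1}{2}A^{T}$ exactly as in the BwI protocol to turn Dani's half into the transposed projector on $\cH_{B_{out}}$. The tensor-product structure makes these two manipulations compose cleanly into Eq.~\eqref{eq:corr2-channel}. Substituting into the Bell functional and using the proportionality of $\widetilde{M}^{C}_{c|w}$ and $(\widetilde{\sigma}_{d|u})^{T}$ to $\pi_{c|w}$ and $\pi_{d|u}$ respectively, the sums collapse and yield $I^{*}[\textbf{\textrm{p}}]=\tfrac{1}{4}\tr{\sum_{a,x}F_{ax}\,J(\I_{a|x})}{}$, which is non-negative for every quantum channel assemblage by the defining property of the EPR functional, and strictly negative when evaluated on $\In^{*}_{\A|\X}$.

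The main obstacle is actually upstream of the theorem statement itself. The argument above takes for granted (i) that Bob's free measurement on $\cH_{B_{out}}\otimes\cH_{B'}$ is the Bell-state projector, and (ii) that the joint state prepared after the two independent self-tests really factorises as $\{\widetilde{\sigma}_{c|w}\otimes\widetilde{\sigma}_{d|u}\}$ in a common computational basis. Point (i) is handled exactly as in the BwI case by optimising $I^{*}$ over arbitrary Bob measurements and showing the quantum bound remains zero, deferred to Appendix~\ref{app:thm}. Point (ii) is new to the channel protocol, since two separately self-tested qubits are only defined up to local isometries; the additional settings $y,z\in\{\lozenge,\blacklozenge\}$ are designed to align the two qubits into a common tensor basis, which is where the genuinely nontrivial work of the protocol lives and is deferred to Appendix~\ref{app:ST-channel}. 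Once these two ingredients are in place, the proof of Theorem~\ref{thm-channel} reduces to the routine basis-change argument above.
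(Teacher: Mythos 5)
Your proposal matches the paper's own proof essentially step for step: decompose the channel EPR-functional operators $F_{ax}$ in the tensor-product basis $\{\pi_{c|w}\otimes\pi_{d|u}\}$, rewrite the observed correlations via the Choi--Jamio{\l}kowski isomorphism (MDI-style) and the identity $\tr{(A\otimes\id)\ket{\phi^{+}}\!\bra{\phi^{+}}}{1}=\tfrac{1}{2}A^{T}$ (BwI-style) to reach Eq.~\eqref{eq:corr2-channel}, and collapse the sums to $I^{*}[\textbf{\textrm{p}}]=\tfrac{1}{4}\tr{\sum_{a,x}F_{ax}J(\I_{a|x})}{}$, with the general-measurement case and the alignment/no-false-positive issues deferred to the same appendices the paper uses. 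This is correct and is the same argument as in the paper.
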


In summary, based on the protocols introduced for the BwI and MDI scenarios, we showed that all channel assemblages that belong to the set $\In^{QC}$ can generate post-quantum Bell-type correlations when embedded in a larger network. This result completes our analysis and motivates the following question: is it possible to construct a generalised post-quantum assemblage that cannot generate post-quantum correlations? Such an assemblage would need to be either a generalised multipartite assemblage or have a structure different from those discussed in this paper and in Ref.~\cite{activation}. 

\section{Discussion}

This paper focused on generalised EPR scenarios. In these scenarios, Alice and Bob share a system. Alice performs measurements on her subsystem, while Bob locally processes his subsystem using either a local CPTP map on the input state, a measurement channel, or a general quantum channel.

We have shown how to embed a bipartite assemblage in a multipartite set-up such that post-quantum correlations can be generated in this larger network. For post-quantum assemblages that can only generate quantum correlations in a bipartite device-independent setting, this protocol activates their post-quantumness, i.e., it allows one to witness the post-quantum nature of the assemblage in a Bell-type scenario. We introduced and analyzed protocols for three generalised EPR scenarios: the Bob-with-input, measurement-device-independent, and channel EPR scenarios. 

The existence of post-quantum bipartite generalised EPR assemblages, which generate only quantum correlations when Bob's processing is somehow transformed to be device-independent, may suggest that there is a fundamental difference between post-quantum assemblages and post-quantum Bell-type correlations. However, our results show a relation between these two resources. For such sets of bipartite assemblages introduced in this work, we showed that they can generate post-quantum correlations when embedded in a larger network. As of now, there are no examples of post-quantum assemblages in EPR scenarios that cannot generate post-quantum correlations in a device-independent setting. This result was known for the standard multipartite EPR scenarios~\cite{activation} and here we proved it for the three generalisations considered in the paper. 

The construction of the protocols for all scenarios relies on the same idea. In each protocol, an additional quantum resource is self-tested and later used to transform EPR functionals to Bell functionals. This idea was first introduced as a method of device-independent entanglement certification~\cite{bowles2018device,bowles2018self}. However, it appears to have broad and powerful applications beyond this problem. Exploring other research areas where this concept could be applied seems interesting.

One interesting application of our results is to the quantum causal models program. Consider Alice and Bob sharing a generalised EPR assemblage with the objective of determining whether their shared common cause is classical, quantum, or post-quantum. If the assemblage is post-quantum but of the type discussed in this paper—capable of generating only quantum correlations in a bipartite device-independent setting—Alice and Bob might mistakenly conclude that their common cause is quantum if they only perform a bipartite Bell-like test. However, our results demonstrate that the post-quantum nature of the common cause can always be revealed if they employ the protocol we proposed to generate multipartite post-quantum device-independent correlations.

Looking forward, it would be interesting to see how the activation protocol changes for multipartite generalised EPR scenarios. The activation protocol for assemblages generated in experiments with multiple Alice's or Bob's should be similar to the one we introduced in this paper; however, a detailed study of this problem could reveal some differences.

\begin{acknowledgments}
BZ acknowledges support by the National Science Centre, Poland no. 2021/41/N/ST2/02242. MJH acknowledges the FQXi large grant ``The Emergence of Agents from Causal Order'' (FQXi FFF Grant number FQXi-RFP-1803B).
PS is a CIFAR Azrieli Global Scholar in the Quantum Information Science Program, and also gratefully acknowledges support from a Royal Society University Research Fellowship (UHQT/NFQI). ABS acknowledges support by the Foundation for Polish Science (IRAP project, ICTQT, contract no. 2018/MAB/5, co-financed by EU within Smart Growth Operational Programme), and by the IRA Programme, project no. FENG.02.01-IP.05-0006/23, financed by the FENG program 2021-2027, Priority FENG.02, Measure FENG.02.01., with the support of the FNP.
\end{acknowledgments}

\appendix

\section{Self-testing stage of the protocol}\label{app:ST}
\subsection{Bob-with-input EPR scenario: self-testing of \texorpdfstring{$\As^{(r)}_{\C|\W}$}{}}\label{app:ST-BwI}

The first step of the activation protocol for the Bob-with-input EPR scenario relies on the self-testing of the assemblage $\As^{(r)}_{\C|\W}$. The Bell inequality used to self-tests the assemblage $\As^{(r)}_{\C|\W}$ was analysed in Refs.~\cite{bowles2018self,Chen2021robustselftestingof}. In this appendix, we follow the result of Ref.~\cite{Chen2021robustselftestingof}.

Let us focus on the assemblage shared between Bob and Charlie in Fig.~\ref{fig:protocol-bwi}. From the statistics $\textbf{\textrm{p}}=\{p(abc|xyzw)\}$ obtained in the activation protocol, compute the marginal $\textbf{\textrm{p}}^{B'C}=\{p(bc|zw)\}$. Then, the following result holds.

\begin{lem}\label{lemma-ST}
    Given a marginal correlation $\textbf{\textrm{p}}=\{p(bc|zw)\}$ between Bob and Charlie, compute the following Bell functional
    \begin{align}
       \I_E =  \ev{C_1B_1} +\ev{C_1B_2} - \ev{C_1B_3}-\ev{C_1B_4}\\ \nonumber
       +\ev{C_2B_1}-\ev{C_2B_2}+\ev{C_2B_3}-\ev{C_2B_4} \\ \nonumber
       +\ev{C_3B_1}-\ev{C_3B_2}-\ev{C_3B_3}+\ev{C_3B_4}.
    \end{align}
If $\I_E =4\sqrt{3}$, then the assemblage shared between Bob and Charlie is given by $\As^{(r)}_{\C|\W}$ with elements $\sigma^{(r)}_{c|w}=r\widetilde{\sigma}_{c|w}+(1-r)(\widetilde{\sigma}_{c|w})^{T}$. Here, $\widetilde{\sigma}_{c|1}=(\id + (-1)^{c} Z)/4$, $\widetilde{\sigma}_{c|2}=(\id + (-1)^{c} X)/4$ and $\widetilde{\sigma}_{c|3}=(\id - (-1)^{c} Y)/4$.
\end{lem}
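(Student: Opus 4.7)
The plan is to recognise $\I_E$ as the Elegant Bell Inequality of Gisin, whose maximum quantum value $4\sqrt{3}$ is attained by the two-qubit maximally entangled state $\ket{\phi^+}$ together with Charlie's three settings being the Pauli observables $Z$, $X$, $Y$ and Bob's four settings pointing toward the vertices of a regular tetrahedron on the Bloch sphere. The self-testing version of this statement, which is precisely what is needed to identify the assemblage, was proved in Refs.~\cite{bowles2018self,Chen2021robustselftestingof}. My task is then to (i)~invoke that result to pin down Charlie's measurements and the bipartite state, (ii)~compute the induced assemblage on Bob's side, and (iii)~account for the complex-conjugation ambiguity that produces the mixture parameter $r$.

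First, I would expand $4\sqrt{3}\,\id - \I_E$ as a sum of squares of operators linear in the $C_w$'s and $B_z$'s. Saturating the bound forces each squared term to annihilate the shared state on its support, yielding algebraic relations that force Charlie's three observables to pairwise anticommute. Applying Jordan's lemma together with a standard dimension-reduction argument then produces a local isometry that maps the shared state to $\ket{\phi^+}$ and Charlie's three observables to $Z$, $X$, $Y$ (padded on a junk subsystem which may be traced out without affecting the assemblage).

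Next, I would read off the assemblage elements. For Charlie's measurement $w\in\{1,2,3\}$ with outcome $c$, the corresponding projector is $P^C_{c|w}=(\id + (-1)^{c}O_w)/2$ with $O_1 = Z$, $O_2 = X$, $O_3 = Y$. The identity $\tr{(P^C_{c|w}\otimes\id_B)\,\ket{\phi^+}\!\bra{\phi^+}}{C} = \tfrac{1}{2}(P^C_{c|w})^T$, together with $Z^T = Z$, $X^T = X$ and $Y^T = -Y$, gives Bob's conditional unnormalised states $(\id + (-1)^c Z)/4$, $(\id + (-1)^c X)/4$ and $(\id - (-1)^c Y)/4$, which are exactly the $\widetilde{\sigma}_{c|w}$ stated in the lemma.

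The main obstacle, and the reason the statement involves a free parameter $r$ rather than a pure identification, is the well-known complex-conjugation ambiguity in self-testing: every real-valued Bell functional is invariant under simultaneously conjugating all states and observables, and $Y$ is the only element of $\{X,Y,Z\}$ that flips sign under transpose. Transferred through the maximally entangled state, this freedom replaces each $\widetilde{\sigma}_{c|w}$ by its transpose $(\widetilde{\sigma}_{c|w})^T$, and any convex combination of the two realisations with weight $r\in[0,1]$ reproduces the same marginal correlators and therefore the same value $4\sqrt{3}$. Combining this with the self-testing identification from the previous step yields exactly the stated form $\sigma^{(r)}_{c|w} = r\,\widetilde{\sigma}_{c|w} + (1-r)(\widetilde{\sigma}_{c|w})^T$ and concludes the proof.
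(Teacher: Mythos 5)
Your proposal is correct and follows essentially the same route as the paper, which does not prove Lemma~\ref{lemma-ST} from first principles but imports it from the self-testing results of Refs.~\cite{bowles2018self,Chen2021robustselftestingof} for the elegant Bell inequality, exactly as you do. Your additional sketch of the internal machinery (sum-of-squares decomposition of $4\sqrt{3}\,\id-\I_E$ forcing anticommuting observables on Charlie's side, the local isometry to $\ket{\phi^+}$ with $Z,X,Y$, the partial-trace/transpose identity giving $\widetilde{\sigma}_{c|w}$, and the complex-conjugation ambiguity that produces the mixing parameter $r$) accurately reflects how those cited works establish the statement, including the sign flip $Y^T=-Y$ that explains $\widetilde{\sigma}_{c|3}=(\id-(-1)^c Y)/4$.
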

Bob's observables that saturate the functional $\I_E$ are given by $\{ \widetilde{B_1},\widetilde{B_2},\widetilde{B_3},\widetilde{B_4}\} = \{ (Z+X-Y)/\sqrt{3}, (-Z+X+Y)/\sqrt{3}, (-Z-X-Y)/\sqrt{3} \}$. Therefore, in the self-testing stage of the protocol, Bob's measurement $M^{BB'}_{b|z}$ on his two qubits corresponds to the observables $\id^{B} \otimes \widetilde{B_z}^{B'}$ for $z \in\{1,2,3,4\}$.

Like all self-testing statements, the self-tested assemblage has the desired form up to local isometries. For discussion about this result, including the possible isometries in the EPR scenario, we refer the reader to Ref.~\cite{Chen2021robustselftestingof}.

\subsection{Measurement-device-independent EPR scenario: self-testing of \texorpdfstring{$\As^{(r)}_{\C|\Z}$}{}}\label{app:ST-MDI}

The same self-testing result as in the Bob-with-input scenario is used in the formulation of the protocol for the measurement-device-independent scenario. In the activation of post-quantumness of MDI assemblages, the assemblage to be self-tested in the first step of the protocol is the standard EPR assemblage $\As^{(r)}_{\C|\Z}$ shared between Bob and Charlie in Fig.~\ref{fig:protocol-mdi}. Then, the self-testing result of Lemma~\ref{lemma-ST} must be used for the marginal $\textbf{\textrm{p}}^{B_{in}C}=\{p(bc|yz)\}$ computed from $\textbf{\textrm{p}}=\{p(abc|xyz)\}$.

\subsection{Channel EPR scenario: self-testing of \texorpdfstring{$\As^{(r_1)}_{\C|\W}$}{} and \texorpdfstring{$\As^{(r_2)}_{\D|\U}$}{}}\label{app:ST-channel}

In the protocol for the channel EPR scenario, two assemblages are self-tested: $\As_{\C|\W}$ shared between Bob and Charlie and $\As_{\D|\U}$ shared between Bob and Dani in Fig.~\ref{fig:protocol-channel}. The goal of the first step of the protocol is to show that the elements of the self-tested assemblages correspond to the basis of the operators $\{F_{ax}\}$ that define the Bell functional used for post-quantumness certification. In the channel EPR scenario, the operators $\{F_{ax}\}$ can be decomposed in the basis $\{ \pi_{c|w} \otimes \pi_{d|u}\}$ as $F_{ax}=\sum_{c,d,w,u} \xi^{ax}_{cdwu}  ( \pi_{c|w} \otimes \pi_{d|u})$ for some real numbers $\{\xi^{ax}_{cdwu}\}$. 

The self-testing stage relies on the marginals $\textbf{\textrm{p}}^{B_{in}C}=\{p(b_{out}c|yw)\}$ and $\textbf{\textrm{p}}^{B'D}=\{p(bd|zu)\}$, both calculated from the overall statistics $\textbf{\textrm{p}}=\{p(abcd|xyzwu)\}$. Here, $b_{out}$ represents a classical outcome obtained in the first stage of the protocol, where the Hilbert space $\cH_{B_{out}}$ encodes merely a classical outcome. If one simply uses the result of Lemma~\ref{lemma-ST}, the relevant assemblages will be certified to have elements $\sigma^{(r_1)}_{c|w}=r_1\widetilde{\sigma}_{c|w}+(1-r_1)(\widetilde{\sigma}_{c|w})^{T}$ and $\sigma^{(r_2)}_{d|u}=r_2\widetilde{\sigma}_{d|u}+(1-r_2)(\widetilde{\sigma}_{d|u})^{T}$, with unknown parameters $r_1$ and $r_2$. To avoid a false-positive detection in the activation protocol (which we discuss further in the Appendix~\ref{app:mixture-channel}), this self-test need to be accompanied by a complementary procedure to certify that the two assemblages are aligned, as we outline below. 

Let $\As^{(r)}_{\C\D|\W\U}$ denote the tensor product of the self-tested assemblages. The assemblage $\As^{(r)}_{\C\D|\W\U}$ has elements $\{\sigma^{(r_1)}_{c|w} \otimes \sigma^{(r_2)}_{d|u}\}$. The goal of the alignment procedure is to ensure that these elements have the form $\{r (\widetilde{\sigma}_{c|w} \otimes \widetilde{\sigma}_{d|u})+(1-r)(\widetilde{\sigma}_{c|w} \otimes \widetilde{\sigma}_{d|u})^{T}\}$ for some parameter $r$. To align the assemblages, we need to introduce new measurements on the system $\cH_{B_{out}} \otimes \cH_{B'}$ labeled by settings $\{\lozenge,\blacklozenge\}$. The outcome set of the measurement $\lozenge$ is denoted by $\B^{\lozenge}$ and has 4 outcomes, while the outcome set of the measurement $\blacklozenge$ is denoted by $\B^{\blacklozenge}$ and has 2 outcomes. Upon these measurements, the system $\cH_{B_{in}}$ is simply passed to the Bob's measuring device without going through the channel assemblage and measured with the system $\cH_{B'}$. In short, this additional alignment procedure self-tests the measurements corresponding to $\lozenge$ and $\blacklozenge$ to ensure that the relevant assemblage has the desired form. This result relies on Ref.~\cite[Lemma~3]{bowles2018self} and is also discussed in detail in Ref.~\cite[Appendix~B]{activation}.

\section{No false-positives}\label{app:mixture}

\subsection{Bob-with-input EPR scenario}\label{app:mixture-bwi}

Let us start with analyzing the Bob-with-input protocol. In the second step of the protocol, we assumed that the assemblage shared between Bob and Charlie is given by $\As^{(r=1)}_{\C|\W}=\widetilde{\As}_{\C|\W}$. However, the success of first step of the protocol can only guarantee that that the assemblage $\As^{(r)}_{\C|\W}$ that they share has elements of the form $\sigma^{(r)}_{c|w}=r\widetilde{\sigma}_{c|w}+(1-r)(\widetilde{\sigma}_{c|w})^{T}$, with an unknown parameter $r$. Here we show that if the Bob-with-input assemblage $\As_{\A|\X\Y}$ belongs to the quantum set, it is impossible to observe a value of $I_{BwI}^{*}[\textrm{\textbf{p}}]$ which violates the quantum bound regardless of the parameter $r$ that defines $\As^{(r)}_{\C|\W}$. In other words, if the self-testing stage of the protocol is successful and a violation of the quantum bound is observed in the second step, it must be the case that the BwI assemblage shared between Alice and Bob is post-quantum.

Consider the case when $r=0$ and $\As^{(r=0)}_{\C|\W}=\{(\widetilde{\sigma}_{c|w})^{T}\}$. Then, the observed correlations are given by
\begin{align}
    p(a,b=0,c|x,y,z=\star,w)_{T}&=\tr{M^{BB'}_{b=0|z=\star} (\sigma_{a|xy} \otimes (\widetilde{\sigma}_{c|w})^{T})}{\cH_{B},\cH_{B'}}  \\ \nonumber
    &=\frac{1}{2}\tr{((\widetilde{\sigma}_{c|w})^{T})^{T} \sigma_{a|xy}}{},
\end{align}
and the Bell functional can be written as
\begin{align}
    I_{BwI}^{*}[\textbf{\textrm{p}}_{T}] &= \sum_{a,x,y,c,w} \xi^{axy}_{cw} \frac{1}{2}\tr{((\widetilde{\sigma}_{c|w})^{T})^{T} \sigma_{a|xy}}{}, \\ \nonumber
&=\frac{1}{2} \sum_{a,x,y} \tr{\left(\sum_{c,w}\xi^{axy}_{cw}(\widetilde{\sigma}_{c|w})^{T}\right) (\sigma_{a|xy})^{T}}{}, \\ \nonumber
&=\frac{1}{4} \tr{\sum_{a,x,y}F_{axy}(\sigma_{a|xy})^{T}}{}.
\end{align}
If the assemblage $\{\sigma_{a|xy}\}$ is quantum, the assemblage $\{(\sigma_{a|xy})^{T}\}$ is quantum as well (due to the fact that the transpose operation on a qubit is a decomposable positive and trace-preserving map) and $I_{BwI}^{*}[\textbf{\textrm{p}}_{T}] \geq 0$. It follows that if $\{\sigma_{a|xy}\}$ has a quantum realisation,  $I_{BwI}^{*}[\textbf{\textrm{p}}_{T}] \geq 0$ and $I_{BwI}^{*}[\textbf{\textrm{p}}] \geq 0$. Hence, any correlation generated by measurements on the convex combination $\{r\widetilde{\sigma}_{c|w}+(1-r)(\widetilde{\sigma}_{c|w})^{T}\}$ must satisfy the quantum bound as well.

\subsection{Measurement-device-independent EPR scenario}\label{app:mixture-mdi}
The same reasoning as in the protocol for the Bob-with-input scenario can be applied to the measurement-device-independent scenario. If Bob and Charlie share an assemblage defined by $r=0$, i.e., $\As^{(r=0)}_{\C|\Z}=\{(\widetilde{\sigma}_{c|z})^{T}\}$, the observed correlations are given by

\begin{align}\label{eq:corr-MDI}
p(a,b,c|x,y=\star,z)_{T} &=\Ne_{ab|x}((\widetilde{\sigma}_{c|z})^{T}), \\ \nonumber
&=\tr{ (M^{C}_{c|z})^{T} \, J(\Ne_{ab|x})}{}, \\ \nonumber
&=\tr{ M^{C}_{c|z} \, J(\Ne_{ab|x})^{T}}{}.
\end{align}
Here, we used the fact that $\widetilde{\sigma}_{c|z}^{T}=\tr{(\id^{B_{in}} \otimes (M^{C}_{c|z})^{T}) \ket{\phi^{+}}\bra{\phi^{+}} _{B_{in}C}}{C}$. Then, the Bell functional reads
\begin{align}
    I_{MDI}^{*}[\textbf{\textrm{p}}_{T}] &= \sum_{a,b,x,c,z} \xi^{abx}_{cz} \tr{ M^{C}_{c|z} \, J(\Ne_{ab|x})^{T}}{}, \\ \nonumber
&=\sum_{a,b,x} \tr{\left(\sum_{c,z} \xi^{abx}_{cz}  \pi_{c|z}\right) \, J(\Ne_{ab|x})^{T}}{}, \\ \nonumber
&=\tr{\sum_{a,b,x}F_{abx}\, J(\Ne_{ab|x})^{T}}{}.
\end{align}
Using the same arguments as for the Bob-with-input case, one can see that if $J(\Ne_{ab|x})$ corresponds to a quantum assemblage, then $J(\Ne_{ab|x})^{T}$ has a quantum realisation and any correlation generated by measurements on the convex combination $\{r\widetilde{\sigma}_{c|w}+(1-r)(\widetilde{\sigma}_{c|w})^{T}\}$ must satisfy the quantum bound.

\subsection{Channel EPR scenario}\label{app:mixture-channel}

In the set-up for the channel EPR scenario, the first step of the protocol certifies that the relevant assemblages have elements $\{r (\widetilde{\sigma}_{c|w} \otimes \widetilde{\sigma}_{d|u})+(1-r)(\widetilde{\sigma}_{c|w} \otimes \widetilde{\sigma}_{d|u})^{T}\}$ for some parameter $r$. In the main text, we showed that when we restrict the tensor product of these assemblages to have the form $\{\widetilde{\sigma}_{c|w} \otimes \widetilde{\sigma}_{d|u}\}$, the quantum bound of the Bell functional given in Eq.~\eqref{eq:functional-channel} is equal to zero. In this appendix, we show that a false positive detection of post-quantumness is not possible even if their tensor product is instead $\{(\widetilde{\sigma}_{c|w} \otimes \widetilde{\sigma}_{d|u})^{T}\}$ (the result of no false-positive detection with $\{r (\widetilde{\sigma}_{c|w} \otimes \widetilde{\sigma}_{d|u})+(1-r)(\widetilde{\sigma}_{c|w} \otimes \widetilde{\sigma}_{d|u})^{T}\}$ follows from a convexity argument, similar to the BwI and MDI protocols).

Let $r=0$, hence $\As^{(r=0)}_{\C|\W}=\{(\widetilde{\sigma}_{c|w})^{T}\}$ and $\As^{(r=0)}_{\D|\U}=\{(\widetilde{\sigma}_{d|u})^{T}\}$. Then, the correlations generated in the protocol set-up can be written as
\begin{align}
    p(a,b=0,c,d|x,y=\star,z=\star,w,u)_{T} &=\tr{M^{B_{out}B'}_{b=0|z=\star} (\I_{a|x}((\widetilde{\sigma}_{c|w})^T) \otimes (\widetilde{\sigma}_{d|u})^T)}{\cH_{B_{out}},\cH_{B'}}, \\ \nonumber
    &=\frac{1}{2}\tr{(((\widetilde{\sigma}_{d|u})^{T})^T \otimes (\widetilde{M}_{c|w})^T ) J(\I_{a|x})}{\cH_{B_{out}},\cH_{B'}}, \\ \nonumber
    &=\frac{1}{2}\tr{((\widetilde{\sigma}_{d|u})^{T} \otimes \widetilde{M}_{c|w} ) (J(\I_{a|x}))^T}{\cH_{B_{out}},\cH_{B'}},
\end{align}
where we used the same transformations as in the Appendix~\ref{app:mixture-bwi} for the BwI protocol and in the Appendix~\ref{app:mixture-mdi} for the MDI protocol. The only thing left to show is that if $J(\I_{a|x})$ is a quantum assemblage, $J(\I_{a|x})^T$ has a quantum realisation as well. Recall that the Choi matrix of a quantum channel assemblage has the following form
\begin{equation}
J(\I_{a|x})=\tr{(M_{a|x} \otimes \Gamma^{BB_{in} \rightarrow B_{out}} \otimes \id^{C})[\rho_{AB} \otimes (\ket{\phi^{+}}\bra{\phi^{+}}) _{B_{in}C}]}{A}.
\end{equation}
Therefore, we can write its transpose as
\begin{align}
J(\I_{a|x})^T &= \tr{(M_{a|x} \otimes \Gamma^{BB_{in} \rightarrow B_{out}} \otimes \id^{C})[\rho_{AB} \otimes (\ket{\phi^{+}}\bra{\phi^{+}}) _{B_{in}C}]}{A}^T, \\ \nonumber
&= \tr{(\id^{A} \otimes T^{B_{out}C})(M_{a|x} \otimes \Gamma^{BB_{in} \rightarrow B_{out}} \otimes \id^{C})[\rho_{AB} \otimes (\ket{\phi^{+}}\bra{\phi^{+}}) _{B_{in}C}]}{A}.
\end{align}
To show that $J(\I_{a|x})^T$ admits a quantum realisation, we will use the following lemma.
\begin{lem}\label{lemma}
    Let $\Phi(\cdot)$ and $\Psi(\cdot)$ be CPTP maps and $\rho$ be a quantum state. The application of a CPTP map on a quantum state followed by a transpose, $\left( \Phi (\rho) \right)^{T}$, is equivalent to first applying the transpose on the state, and then acting on it with a different CPTP map, $ \Psi (\rho^{T})$.
\end{lem}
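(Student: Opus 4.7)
The plan is to construct the desired map $\Psi$ explicitly from a Kraus decomposition of $\Phi$ and then verify both the CPTP property and the required identity. First I would write $\Phi$ in Kraus form, $\Phi(\rho) = \sum_k K_k \rho K_k^\dagger$, with the trace-preserving condition $\sum_k K_k^\dagger K_k = I$. Applying the transpose and using the elementary identities $(AB)^T = B^T A^T$ together with $(A^\dagger)^T = \overline{A}$, I would rewrite
\begin{equation}
(\Phi(\rho))^T = \sum_k (K_k \rho K_k^\dagger)^T = \sum_k \overline{K_k}\, \rho^T\, K_k^T.
\end{equation}

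This motivates defining the candidate map
\begin{equation}
\Psi(\sigma) := \sum_k \overline{K_k}\, \sigma\, (\overline{K_k})^\dagger,
\end{equation}
since $(\overline{K_k})^\dagger = K_k^T$, so that the expression for $(\Phi(\rho))^T$ above is precisely $\Psi(\rho^T)$. Complete positivity of $\Psi$ is immediate: it is written in Kraus form with operators $\{\overline{K_k}\}_k$. For trace preservation, I would compute
\begin{equation}
\sum_k (\overline{K_k})^\dagger \overline{K_k} = \sum_k K_k^T \overline{K_k} = \overline{\sum_k K_k^\dagger K_k} = \overline{I} = I,
\end{equation}
where the middle equality follows from comparing matrix elements of $K_k^T \overline{K_k}$ and $\overline{K_k^\dagger K_k}$. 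Hence $\Psi$ is CPTP.

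Putting the two calculations together yields $(\Phi(\rho))^T = \Psi(\rho^T)$ for every state $\rho$, which is the claim. There is no serious obstacle here: the only point requiring mild care is the correct bookkeeping of conjugates and transposes when relating $(\overline{K_k})^\dagger$ to $K_k^T$ and verifying that the Kraus completeness relation of $\Phi$ is mapped to that of $\Psi$ under entrywise complex conjugation. An equivalent, more abstract route would be to define $\Psi := T \circ \Phi \circ T$, where $T$ denotes the transpose; the formal identity $\Psi(\rho^T) = (\Phi(\rho))^T$ is then automatic, and one only needs to observe that although $T$ is not CP on its own, conjugating a CPTP map by $T$ on both sides produces a CPTP map, as the Kraus-level argument above makes explicit.
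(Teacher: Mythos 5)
Your proof is correct and follows essentially the same route as the paper: take a Kraus decomposition of $\Phi$, observe that $(\Phi(\rho))^T = \sum_k \overline{K_k}\,\rho^T\,(\overline{K_k})^\dagger$, and check that the conjugated Kraus operators satisfy the completeness relation, so $\Psi$ is CPTP. The closing remark about $\Psi = T\circ\Phi\circ T$ is a nice reformulation but adds nothing beyond what the Kraus-level argument already establishes.
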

\begin{proof}
Using Kraus decomposition of the map $\Phi(\cdot)$, we can write $\Phi (\rho) = \sum_{k} A_k \rho A_k^{\dagger}$, where $\sum_{k}A_k^{\dagger}A_k = \id$. The transpose of this new state can be written as
\begin{align}
  \left( \Phi (\rho) \right)^{T} &= \left( \sum_{k} A_k \rho A_k^{\dagger} \right)^{T}, \\ \nonumber
  &= \sum_{k} (A_k^{\dagger})^{T} \rho^{T}  (A_k)^{T} , \\ \nonumber
  &= \sum_{k} B_k \rho^{T}  B_k^{\dagger},
\end{align}
where we defined a new operator $B_k = (A_k^{\dagger})^{T}$. This new operator defines a CPTP map $ \Psi (\cdot) = \sum_{k} B_k (\cdot) B_k^{\dagger}$ as 
\begin{align}
    \sum_{k} B_k^{\dagger} B_k  &= \sum_{k} (A_k)^{T} (A_k^{\dagger})^{T}, \\  \nonumber
     &= \sum_{k} (A_k^{\dagger} A_k)^{T}, \\ \nonumber
     &= \left( \sum_{k} A_k^{\dagger} A_k\right)^{T}, \\ \nonumber
     &= \id.
\end{align}
Therefore, we showed that $ \left( \Phi (\rho) \right)^{T} = \Psi (\rho^{T})$.
\end{proof}
Due to Lemma~\ref{lemma}, we can move the transpose from the output system of the CPTP map $\Gamma^{BB_{in} \rightarrow B_{out}}$ to its input systems. Then, we can write
\begin{align}
J(\I_{a|x})^T &= \tr{(M_{a|x} \otimes \Gamma'^{BB_{in} \rightarrow B_{out}} \otimes \id^{C})(\id^{A} \otimes T^{BB_{in}C})[\rho_{AB} \otimes (\ket{\phi^{+}}\bra{\phi^{+}}) _{B_{in}C}]}{A}, \\ \nonumber
&= \tr{(M_{a|x} \otimes \Gamma'^{BB_{in} \rightarrow B_{out}} \otimes \id^{C})[\rho_{AB}^{T_B} \otimes (\ket{\phi^{+}}\bra{\phi^{+}}) _{B_{in}C}^T]}{A}, \\ \nonumber
&= \tr{(M_{a|x} \otimes \Gamma'^{BB_{in} \rightarrow B_{out}} \otimes \id^{C})^T_{A}[\rho_{AB}^{T} \otimes (\ket{\phi^{+}}\bra{\phi^{+}}) _{B_{in}C}^T]}{A}, \\ \nonumber
&= \tr{((M_{a|x})^T \otimes \Gamma'^{BB_{in} \rightarrow B_{out}} \otimes \id^{C})[\rho_{AB}^{T} \otimes (\ket{\phi^{+}}\bra{\phi^{+}}) _{B_{in}C}^T]}{A}.
\end{align}
Therefore, if $J(\I_{a|x})$ is a valid quantum assemblage, then its transpose $J(\I_{a|x})^T$ admits of a quantum realisation in terms of the measurements $\{(M_{a|x})^{T}\}$, a CPTP map $\Gamma'^{BB_{in} \rightarrow B_{out}}$ and the states $\rho_{AB}^T$ and $(\ket{\phi^{+}}\bra{\phi^{+}}) _{B_{in}C}^T$. This completes our argument. 

Finally, we can revisit the additional alignment step of the self-testing stage and discuss why is it necessary. Imagine the assemblages are only self-tested to have the form $\sigma^{(r_1)}_{c|w}=r_1\widetilde{\sigma}_{c|w}+(1-r_1)(\widetilde{\sigma}_{c|w})^{T}$ and $\sigma^{(r_2)}_{d|u}=r_2\widetilde{\sigma}_{d|u}+(1-r_2)(\widetilde{\sigma}_{d|u})^{T}$, with unknown parameters $r_1$ and $r_2$. The case of $r_1 = r_2 = 1$ was discussed in the main text. The contrary situation, i.e., when $r_1 = r_2 = 0$, is discussed in this appendix. However, if the self-tested assemblages would be defined by $r_1 = 1$ and $r_2 = 0$, this could create a false-positive detection in the activation protocol. Therefore, the alignment stage of the step 1 of the protocol is necessary to eliminate the possibility of a false-positive detection. 

\section{Optimizing the Bell functional over all measurements}\label{app:thm}

In the activation protocols for the Bob-with-input and channel EPR scenarios, we first assumed that $M_{b=0|z=\star}=\ket{\phi^{+}}\bra{\phi^{+}}$. Then, we have shown that under this assumption the quantum bound of the Bell functionals constructed for these scenarios is equal to zero. In this appendix, we show that the quantum bound of these functionals, when optimised over all possible measurements $M_{b=0|z=\star}$, is still equal to zero. Therefore, observing a negative value of the Bell functional in an activation set-up necessarily means that the underlying generalised EPR assemblage is post-quantum. Here we focus on the Bob-with-input scenario; the proof for the channel EPR scenario has the same structure.

In Section~\ref{sec:protocol-bwi}, we have shown that
\begin{align}\label{eq:app-diag}
     I_{BwI}^{*}[\textbf{\textrm{p}}]=\frac{1}{2} \sum_{a,x,y,c,w} \xi^{axy}_{cw} \tr{M_{0|\star} (\sigma_{a|xy} \otimes \widetilde{\sigma}_{c|w})}{\cH_{B},\cH_{B'}} \geq 0,
     \end{align}
for $M_{b=0|z=\star}=\ket{\phi^{+}}\bra{\phi^{+}}$. The proof that this inequality holds for arbitrary $M_{b=0|z=\star}$ relies on the observations made in Ref.~\cite[Appendix~A]{activation}. Here we first present it using diagrammatic notation introduced in Ref.~\cite{coecke2018picturing}. The standard mathematical proof can be found below the diagrammatic one. 

Let us
represent an arbitrary quantum Bob-with-input assemblage on a system $\cH_{B'}$ with the diagram
\begin{align*}
\tikzfig{BwIass}
\end{align*}
for a quantum system $\cH_{B}$, an effect $\tikzfig{effectax}$, a state $\tikzfig{rhoAB}$ and a channel $\tikzfig{channel}$. Now, we can represent Eq.~\eqref{eq:app-diag} as
\begin{align}\label{eq:diag2}
I_{BwI}^{*}[\mathbf{p}] = \frac{1}{4} \sum_{a,x,y}\,\tikzfig{sum1} \, \geq 0 \,.
\end{align}
Here, the representation of the operators $F_{axy}$ as black triangles implies that they are possibly nonphysical (post-quantum) processes.
The statement of Theorem~\ref{thm-bwi} can be expressed in the diagrammatic language as:
\begin{align}\label{eq:diag3}
I_{BwI}^{*}[\mathbf{p}] = \frac{1}{4} \sum_{a,x,y}\,\tikzfig{sum2} \, \geq 0 \,,
\end{align}
where $M_{0|\star}$ now represents a general measurement. To prove this statement, first define a new quantum state $\rho'$ as
\begin{align}\label{eq:diag4}
\tikzfig{rhop}
\end{align}
This is a valid quantum state, although it can be unnormalised. Then, we can use this construction to rewrite the Bell functional as
\begin{align}\label{eq:diag5}
I_{BwI}^{*}[\mathbf{p}] = \frac{1}{4} \sum_{a,x,y}\,\tikzfig{sum2} 
=\, \frac{1}{4} \sum_{a,x,y}\,\tikzfig{sum3} \, \geq 0 \,,
\end{align}
The inequality follows from Eq.~\eqref{eq:diag2}, which is valid for all quantum states and quantum channels.

Let us now present the proof in standard mathematical notation. Recall that any quantum Bob-with-input assemblage can be expressed as 
\begin{align}
\sigma_{a|xy}=\tr{(M_{a|x}\otimes \mathcal{E}^{B}_y)\rho_{AB}}{A}.
\end{align}
Define a new quantum state on the Hilbert space $\cH_A \otimes \cH_{B_c}$ as
\begin{align}\label{eq:app-newstate}
\rho' = \tr{M^{BB'}_{b=0|z=\star} (\id^{A}\otimes \mathcal{E}^{B}_y\rho_{AB} \otimes \ket{\phi^{+}}\bra{\phi^{+}}_{B'B_c}}{BB'}.
\end{align}
Notice that we introduced an additional system on $B_c$ which will be useful in the calculations later. The state $\rho'$ is a valid unnormalised quantum state. Now, recall that the Bell functional in Theorem~\ref{thm-bwi} is given by
\begin{align}
     I_{BwI}^{*}[\textbf{\textrm{p}}]&=\frac{1}{2} \sum_{a,x,y,c,w} \xi^{axy}_{cw} \tr{M_{0|\star} (\sigma_{a|xy} \otimes \widetilde{\sigma}_{c|w})}{BB'} \\
     &=\frac{1}{4} \sum_{a,x,y} \tr{M_{0|\star} (\sigma_{a|xy} \otimes F_{axy})}{BB'} \\
 &= \frac{1}{4} \sum_{a,x,y} \tr{M_{0|\star} (\tr{(M_{a|x}\otimes \mathcal{E}^{B}_y)\rho_{AB}}{A} \otimes F_{axy})}{BB'}.
     \end{align}
Next, we insert an identity map on the system $\cH_{B'}$ by introducing the additional system on $B_c'B_c$ in the following form: $\tr{(\id^{B'_c}\otimes \ket{\phi^{+}}\bra{\phi^{+}}_{B_cB'})(\ket{\phi^{+}}\bra{\phi^{+}}_{B'_cB_c}\otimes \id^{B'})}{B_c'B_c}$. Then, the Bell functional $I_{BwI}^{*}[\textbf{\textrm{p}}]$ can be written as
\begin{align}
     I_{BwI}^{*}[\textbf{\textrm{p}}]
 &= \frac{1}{4} \sum_{a,x,y} \tr{ (M_{a|x} \otimes M_{0|\star} \otimes \ket{\phi^{+}}\bra{\phi^{+}}_{B'_cB_c}) \,\, \mathcal{E}^{B}_y \,\, (\rho_{AB} \otimes F_{axy} \otimes \ket{\phi^{+}}\bra{\phi^{+}}_{B_cB'})}{ABB'B_c'B_c}.
     \end{align}     
We can now rewrite the Bell functional using the quantum state $\rho'$ defined in Eq.~\eqref{eq:app-newstate} as follows
\begin{align}\label{eq:app-mathfinal}
     I_{BwI}^{*}[\textbf{\textrm{p}}]
 &= \frac{1}{4} \sum_{a,x,y} \tr{ (M_{a|x} \otimes \ket{\phi^{+}}\bra{\phi^{+}}_{B'_cB_c}) (\rho'_{AB_c} \otimes F_{axy} )}{AB_cB_c'}.
     \end{align}
From Eq.~\eqref{eq:app-diag} it follows that for the form of $I_{BwI}^{*}[\textbf{\textrm{p}}]$ given in Eq.~\eqref{eq:app-mathfinal} , $I_{BwI}^{*}[\textbf{\textrm{p}}]\geq 0$ for all quantum states $\rho'$ (note that Eq.~\eqref{eq:app-diag} is valid for all CPTP maps $\mathcal{E}_y$; here, we simplified the expression such that the CPTP map on the system $B_c$ is the identity). This concludes the proof.

\section{Activation in the Bob-with-input scenario - example}\label{app:example}
In this appendix, we use the construction from Theorem~\ref{thm-bwi} to derive a Bell functional that certifies post-quantumness of the assemblage $\As^{PTP}_{\A|\X\Y}$ introduced in Ref.~\cite[Eq.~(6)]{sainz2020bipartite}, which belongs to the set $\As^{QC}$.

We focus on the assemblage $\As^{PTP}_{\A|\X\Y}$ in a Bob-with-input EPR scenario where $\X=\{1,2,3\}$, $\A=\{0,1\}$ and $\Y=\{0,1\}$. The elements of this assemblage, $\{\sigma^{PTP}_{a|xy}\}$, can be mathematically expressed as follows: 
\begin{align}\label{eq:PTP}
\sigma^{PTP}_{a|xy} =  \frac{1}{4} (\id + (-1)^{a+\delta_{x,2}+\delta_{y,1}}\sigma_x) \,, 
\end{align}
where the operators $\sigma_1$, $\sigma_2$, and $\sigma_3$ (since the choice of Alice's measurement $x \in \{1,2,3\}$) denote Pauli X, Y, and Z operators respectively. In Ref.~\cite[Appendix D]{sainz2020bipartite}, it was shown that $\As^{PTP}_{\A|\X\Y}$ is a post-quantum assemblage and that if Bob  measures his subsystem, the correlations that arise between him and Alice always admit a quantum realisation.

To show that the post-quantumness of $\As^{PTP}_{\A|\X\Y}$ can be activated, we will use the EPR functional constructed in Ref.~\cite[Eq.~(D3)]{sainz2020bipartite}. It is specified by the operators $F^{PTP}_{axy}=\frac{1}{2} (\id - (-1)^a \sigma_x)^{T^{y}}$,
where $T^y$ denotes that the transpose operation is applied when $y=1$. The classical bound of this functional is $\beta^{C}_{PTP}=1.2679$ and the no-signalling bound is $\beta^{NS}_{PTP} = 0$. The assemblage  $\As^{PTP}_{\A|\X\Y}$ saturates the minimum bound of this EPR functional. Although the quantum bound $\beta^{Q}_{PTP}$ of this functional is not known exactly, it can be lower bounded by the almost-quantum bound, i.e., a bound calculated for all assemblages lying in the outer approximation of the quantum set called the almost-quantum set, which is given by $\beta^{AQ}_{PTP}=0.4135$. On the other hand, $\beta^{Q}_{PTP}$ is upper bounded by the classical bound. Hence, we know that $1.2679 \geq \beta^{Q}_{PTP} \geq 0.4135$.
In order to use the activation protocol introduced in Section~\ref{sec:protocol-bwi}, we should normalise the operators $\{F^{PTP}_{axy}\}$ and construct an EPR functional with a quantum bound equal to zero\footnote{In Section~\ref{sec:protocol-bwi}, we assume that the quantum bound of the functional is equal to zero. This assumption is necessary for the proof that the protocol also works for measurements different than $M^{BB'}_{b=0|z=\star}=\ket{\phi^{+}}\bra{\phi^{+}}$, which is given in Appendix~\ref{app:thm}.}. However, since the value of $\beta^{Q}_{PTP}$ is unknown, this is impossible. Instead, we normalise the operators $\{F^{PTP}_{axy}\}$ using the lower bound on $\beta^{Q}_{PTP}$, namely $\beta^{AQ}_{PTP}$. Then, for any quantum or almost-quantum assemblage $\{\sigma^{AQ}_{a|xy}\}$, the following holds
\begin{align}\label{eq:EPR_functional}
    \tr{\sum_{a,x,y}\widetilde{F}^{PTP}_{axy}\sigma^{AQ}_{a|xy}}{}\geq 0, 
\end{align}
where the operators $\{\widetilde{F}^{PTP}\}$ are given by 
\begin{align}\label{eq:EPR-func}
\widetilde{F}^{PTP}_{axy}=F^{PTP}_{axy} - \frac{\beta^{AQ}_{PTP}}{|\X|\,|\Y|} \, \id \,.
\end{align}
To certify post-qauntumness of $\As^{PTP}_{\A|\X\Y}$ solely from correlations $\{p(a,b,c|x,y,z,w)\}$, we need to transform the EPR functional~\eqref{eq:EPR_functional} to a Bell functional of the form~\eqref{eq:Bell-functional-bwi}. The coefficients $(\xi^{PTP})^{axy}_{cw}$ can be read off Eq.~\eqref{eq:EPR-func} and are given by
\begin{align}\label{eq:Bell-coeff}
(\xi^{PTP})^{axy}_{cw}=\delta_{c,a \oplus 1  \oplus y\delta_{x,2}} \,\, \delta_{w,x\oplus_3 1} - \delta_{w,1} \frac{\beta^{AQ}_{PTP}}{|\X|\,|\Y|}, 
\end{align}
where $\oplus$ and $\oplus_3$ denote sum modulo 2 and modulo 3, respectively. Using the reasoning of Eq.~\eqref{eq:corr2}, we know that the Bell functional specified by coefficients given in Eq.~\eqref{eq:Bell-coeff}, when evaluated on a post-quantum assemblage (which is not almost-quantum), must satisfy $I_{BwI}^{PTP}[\textbf{\textrm{p}}] < 0$. It is easy to check that if $\As_{\C|\W}=\widetilde{\As}_{\C|\W}$, $M^{BB'}_{b=0|z=\star}=\ket{\phi^{+}}\bra{\phi^{+}}$ and $\As_{\A|\X\Y}=\As_{\A|\X\Y}^{PTP}$, the functional $I_{BwI}^{PTP}$ evaluates to $I_{BwI}^{PTP}[\textbf{\textrm{p}}^{PTP}]=-\beta^{AQ}$, which shows the activation of post-quantumness of $\As_{\A|\X\Y}^{PTP}$.

\section{MDI assemblage from the set $\N^{QC}$}\label{app:mdiass}

Here we show that the post-quantum measurement-device-independent assemblage $\N^{PTP}_{\A\B|\X}$ introduced in Ref.~\cite[Eq.~(28)]{channelEPR} belongs to the set $\N^{QC}$. $\N^{PTP}_{\A\B|\X}$ is illustrated in Fig.~\ref{fig:MDI-PTP} and it can be mathematically expressed as
\begin{align}
\N^{PTP}_{\A\B|\X} &= \left\{\Ne^{PTP}_{ab|x}\right\}_{a\in \A, \, b \in \B, \, x \in \X} \,,\\
\text{with} \quad & \begin{cases} \Ne^{PTP}_{ab|x} = \tr{({M}_{a|x} \otimes \id_{B_{out}}) (\id_A \otimes N_{b}) (\id_A \otimes \textrm{CT}^{BB_{in} \rightarrow B})[\rho_{AB} \otimes \rho_y]}{} \,, \nonumber \\
{M}_{a|1} = \frac{\id + (-1)^a \sigma_x}{2}\,,\quad {M}_{a|2} = \frac{\id + (-1)^a \sigma_y}{2}\,,\quad
{M}_{a|3} = \frac{\id + (-1)^a \sigma_z}{2}\,, \nonumber \end{cases}
\end{align}
where $\rho_{AB}=\ket{\phi^{+}}\bra{\phi^{+}}$ and $\sigma_x$, $\sigma_y$, and $\sigma_z$ are the Pauli operators. The processing CT$^{BB_{in} \rightarrow B}$ can be viewed as a controlled-transpose operation on $BB_{in}$, where $B_{in}$ acts as the control qubit and $B$ is the system that is being transposed. Then, the system $B_{in}$ is traced-out and the system $B$ is measured by Bob to generate a classical outcome $b$. The measurement elements are $N_0=\frac{1}{3}\id+\frac{1}{3}\sigma_y$ and $N_1=\frac{2}{3}\id-\frac{1}{3}\sigma_y$. In Ref.~\cite{channelEPR}, it was shown that $\N^{PTP}$ is a post-quantum assemblage. 
\begin{figure}[h!]
  \begin{center}
{\includegraphics[width=0.2\textwidth]{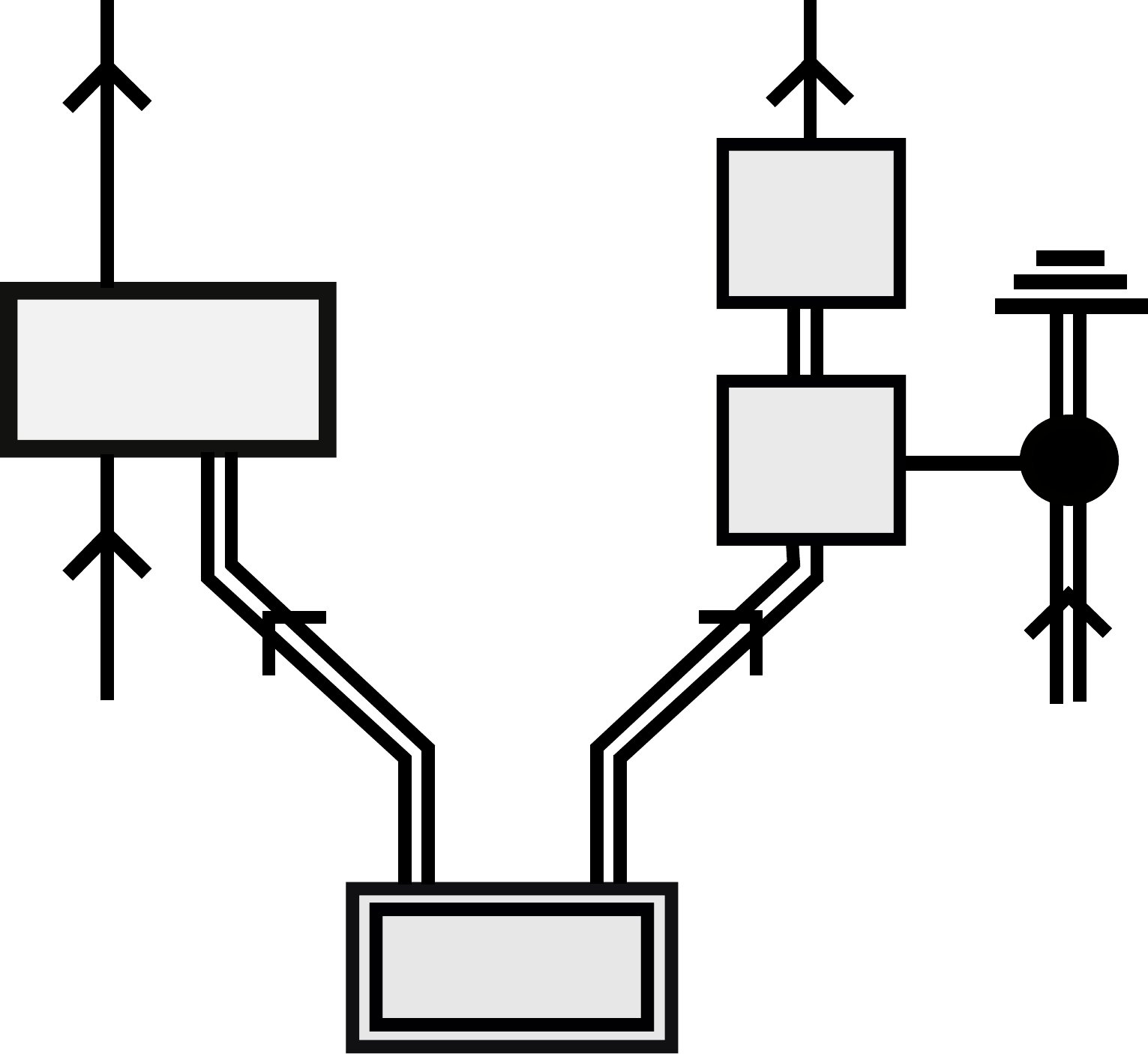}
\put(-100,35){$x$}
\put(-100,75){$a$}
\put(-38,83){$b$}
\put(0,30){$B_{in}$}
\put(-34,65){$N_b$}
\put(-32,46){T}
\put(-61,5){$\rho_{AB}$}
}

  \end{center}

  \caption{Mathematical depiction of the MDI assemblage $\N^{PTP}$: Alice and Bob share a quantum state $\rho_{AB}$; Alice performs measurements on her system, while Bob performs a controlled-transpose operation CT$^{BB_{in} \rightarrow B}$ followed by a measurement $N_b$. }\label{fig:MDI-PTP}
\end{figure}

Let us examine a situation when Bob's input state defined on $\cH_{B_{in}}$ is fixed and given by $\rho_y$. The correlations in such set-up are given by $p(ab|xy) = \Ne^{PTP}_{ab|x}(\rho_y)$ and can be written as
\begin{align}\label{eq:PTP-MDI}
 p(ab|xy) &= \tr{({M}_{a|x} \otimes \id_{B_{out}}) (\id_A \otimes N_{b}) (\id_A \otimes \textrm{CT}^{BB_{in} \rightarrow B})[\rho_{AB} \otimes \rho_y]}{} \,. 
\end{align}
Here, $\rho_y$ acts only as a control qubit. Hence, we can represent the controlled-transpose operation as labeled by the classical label $y$: $(\id_A \otimes \textrm{CT}^{BB_{in} \rightarrow B})[\rho_{AB} \otimes \rho_y] \equiv (\id_A \otimes \textrm{CT}^{B}_{y})\rho_{AB}$. Transpose is a positive and trace-preserving map, hence it has a dual map which is positive and unital. For the transpose map, its dual is also a transpose map. Therefore, instead of applying the transpose map on the system $B$, one can imagine the controlled transpose is applied on the measurement $N_b$ instead; hence, we can rewrite Eq.~\eqref{eq:PTP-MDI} as
\begin{align}\label{eq:PTP-MDI2}
 p(ab|xy) &= \tr{({M}_{a|x} \otimes  N_{b|y})\rho_{AB}}{} \,. 
\end{align}
The new general measurement $N_{b|y}$ depends on the control qubit and can be given by the original measurement $N_b$, its transpose, or superposition thereof. It follows that the correlations defined as in Eq.~\eqref{eq:PTP-MDI2} admit of a quantum realisation. 

\section{Activation protocol beyond qubit assemblages}\label{app:higher}

In this appendix, we show how to adapt the activation protocols such that post-quantumness of higher-dimensional assemblages can be certified. First, we use methods introduced in Ref.~\cite[Appendix B]{activation} to generalise the self-testing stage of the protocol. Then, we adapt the Bell inequalities used in the protocols to the higher-dimensional case for each scenario.

In all three scenarios, we consider assemblages with the dimension of Bob's output system being $d>2$. We embed these systems in a higher-dimensional Hilbert space, where each qudit lives on a Hilbert space arising from a parallel composition of $n$ qubits. The basis for this space is given by tensor products of Pauli operators. Following the intuition of the qubit-assemblages protocols, the first step of the protocol need to self-test the basis elements of the relevant Hilbert space. This step of the protocol is the same for the Bob-with-input, measurement-device-independent and channel EPR scenarios. 

The fact that one can only self-test a mixture of an assemblage and its transpose will again pose a problem here. This is analogous to the problem that occurs in the protocol for the channel EPR scenario, where if the self-tested assemblages would form a tensor product of the form $\widetilde{\sigma}_{c|w} \otimes (\widetilde{\sigma}_{d|u})^{T}$, a false-positive detection of post-quantumness could be detected (we discuss this problem in the Appendix~\ref{app:mixture-channel}). Therefore, instead of just self-testing tensor product of Pauli operators, the self-testing stage of the protocol must guarantee that the assemblage is of the form
\begin{align}\label{eq:app-tensor}
    \sigma^{(r)}_{\cc|\w}=r   \bigotimes_{i=1}^n \widetilde{\sigma}_{c_i|w_i}+(1-r)  \bigotimes_{i=1}^n (\widetilde{\sigma}_{c_i|w_i})^{T}. 
\end{align}
Here, $\cc=(c_1,...,c_n)$ with $c_i \in \{0,1\}$ and $\w=(w_1,...,w_n)$ with $w_i \in \{1,2,3\}$. The technique to self-test $\{\sigma^{(r)}_{\cc|\w}\}$ of the form~\eqref{eq:app-tensor} was first introduced in Ref.~\cite{bowles2018self} and later used in the activation protocol introduced in Ref.~\cite{activation}. We will not recall the exact self-testing statement here; all the technical details can be found in Refs.~\cite{activation,bowles2018self}.

If the first step of the activation protocol is successful, the probabilities generated in the protocol set-up can be used for post-quantumness certification using tailored Bell functionals. Below, we adapt the Bell functional for each scenario to the higher-dimensional case.

\subsection{Bob-with-input EPR scenario}

Consider a Bob-with-input assemblage with the dimension of the system $\cH_B$ being greater than 2. Analogous to the case of a qubit assemblage, define the Bell functional as
\begin{align}
    I_{BwI}^{*}[\textbf{\textrm{p}}] \equiv \sum_{a,x,y,c,w} \xi^{axy}_{cw} p(a,b=0,c|x,y,z=\star,w).
\end{align}
To compute the quantum bound of the functional, recall that
\begin{align}
    p(a,b=0,c|x,y,z=\star,w)=\tr{M^{BB'}_{b=0|z=\star} (\sigma_{a|xy} \otimes \sigma_{\cc|\w}^{(r)})}{},
\end{align}
where $\sigma_{\cc|\w}^{(r)}$ is given by Eq.~\eqref{eq:app-tensor}. For now, assume that $\sigma_{\cc|\w}^{(r=1)}=\bigotimes_{i=1}^n \widetilde{\sigma}_{c_i|w_i}$ and $M^{BB'}_{b=0|z=\star} = \ket{\phi^{n}}\bra{\phi^{n}}$, with $ \ket{\phi^{n}} = \frac{1}{\sqrt{2^n}}\sum_{k=0}^{2^n -1} \ket{kk}_{BB'}$. Then, the Bell functional can be written as
\begin{align}
    I_{BwI}^{*}[\textbf{\textrm{p}}] & = \sum_{a,x,y,c,w} \xi^{axy}_{cw} \frac{1}{2^n} \tr{ (\bigotimes_{i=1}^n \widetilde{\sigma}_{c_i|w_i})^T \sigma_{a|xy}}{}, \\ \nonumber
    & = \frac{1}{2^{n+1}} \sum_{a,x,y} \tr{ F_{axy} \sigma_{a|xy}}{}.
\end{align}
It follows that $I_{BwI}^{*}[\textbf{\textrm{p}}] \geq 0$ for any $\textbf{\textrm{p}}$ arising from a quantum assemblage $\As_{\A|\X\Y}$ with elements $\{\sigma_{a|xy}\}$.

To have a universal protocol for activation of post-quantumness in high-dimensional Bob-with-input assemblages, there are two things left to show. Firstly, it is necessary to show that the quantum bound of the functional $I_{BwI}^{*}$ does not change when it is optimised over measurements different than $M^{BB'}_{b=0|z=\star} = \ket{\phi^{n}}\bra{\phi^{n}}$. The proof of this claim is analogous to the one presented in Appendix~\ref{app:thm}. Secondly, the protocol cannot show false-positive detection when the self-tested assemblage is of the form given in Eq.~\eqref{eq:app-tensor}. Let us first focus on the case when $\sigma_{\cc|\w}^{(r=0)}=(\bigotimes_{i=1}^n \widetilde{\sigma}_{c_i|w_i})^T$. Then, the Bell functional is given by
\begin{align}
    I_{BwI}^{*}[\textbf{\textrm{p}}] = \frac{1}{2^{n+1}} \sum_{a,x,y} \tr{ F_{axy} (\sigma_{a|xy})^T}{}.
\end{align}
The transpose map applied on a quantum Bob-with-input assemblage transforms its elements as follows
\begin{align}\label{eq:app-BwIQ}
    (\sigma_{a|xy})^T &=  \left(\tr{(M_{a|x}\otimes \mathcal{E}_y)\rho_{AB}}{A}\right)^T, \\ \nonumber
    &= \tr{(\id \otimes T) (M_{a|x}\otimes \mathcal{E}_y)\rho_{AB}}{A}.
\end{align}
To show that $(\sigma_{a|xy})^T$ is a quantum assemblage if $(\sigma_{a|xy})$ has a quantum realisation, we will use Lemma~\ref{lemma}. Let $\mathcal{E}_y'$ be a CPTP map that is indexed by $y$. Due to Lemma~\ref{lemma}, we can shift the transpose in Eq.~\eqref{eq:app-BwIQ} to the quantum state as
\begin{align}
    (\sigma_{a|xy})^T &=  \tr{(\id \otimes T) (M_{a|x}\otimes \mathcal{E}_y)\rho_{AB}}{A}, \\ \nonumber
    & = \tr{(M_{a|x}\otimes \mathcal{E}_y')\rho_{AB}^{T_B}}{A}, \\ \nonumber
    &= \tr{(M_{a|x}\otimes \mathcal{E}_y')^{T_A} \rho_{AB}^{T}}{A}, \\ \nonumber
     &= \tr{((M_{a|x})^{T} \otimes \mathcal{E}_y') \rho_{AB}'}{A}, 
\end{align}
where $\{(M_{a|x})^{T}\}$ are valid POVMs and $\rho'$ is a valid quantum state. Therefore, we can see that if $\{\sigma_{a|xy}\}$ form a valid quantum assemblage, the elements $\{(\sigma_{a|xy})^T \}$ also have a quantum realisation in terms of the measurements $\{(M_{a|x})^{T}\}$, CPTP maps $\{ \mathcal{E}_y' \}$ and the state $\rho_{AB}'$. It follows that $I_{BwI}^{*}[\textbf{\textrm{p}}] \geq 0$ for any $\textbf{\textrm{p}}$ arising from a quantum assemblage, even when $\sigma_{\cc|\w}^{(r=0)}=(\bigotimes_{i=1}^n \widetilde{\sigma}_{c_i|w_i})^T$. Hence, by convexity, any correlation generated when the self-tested assemblage has the form~\eqref{eq:app-tensor} cannot create a false-positive detection of post-quantumness.

\subsection{Measurement-device-independent EPR scenario}

Recall that the Bell functional for the MDI scenario reads
\begin{align}
    I_{MDI}^{*}[\textbf{\textrm{p}}] \equiv \sum_{a,b,x,c,z} \xi^{abx}_{cz} p(a,b,c|x,y=\star,z).
\end{align}
If the first step of the protocol was successful, the relevant probabilities can be written as
\begin{align}
p(a,b,c|x,y=\star,z)=\Ne_{ab|x}(\sigma_{\cc|\z}^{(r)}),
\end{align}
where
\begin{align}\label{eq:app-tensor-MDI}
    \sigma^{(r)}_{\cc|\z}=r   \bigotimes_{i=1}^n \widetilde{\sigma}_{c_i|z_i}+(1-r)  \bigotimes_{i=1}^n (\widetilde{\sigma}_{c_i|z_i})^{T}. 
\end{align}
For now, assume that $\sigma_{\cc|\z}^{(r=1)}=\bigotimes_{i=1}^n \widetilde{\sigma}_{c_i|z_i}$ Then, the Bell functional can be written as
\begin{align}
    I_{MDI}^{*}[\textbf{\textrm{p}}] \equiv \sum_{a,b,x,c,z} \xi^{abx}_{cz} \Ne_{ab|x}(\bigotimes_{i=1}^n \widetilde{\sigma}_{c_i|z_i}).
\end{align}
Notice that a Choi matrix of a $d$-dimensional qudit assemblage $\{\Ne_{ab|x}(\cdot)\}$ is expressed in terms of a maximally entangled state $\ket{\phi^{n}}\bra{\phi^{n}}$, with $ \ket{\phi^{n}} = \frac{1}{\sqrt{2^n}}\sum_{k=0}^{2^n -1} \ket{kk}_{B'C}$ and $n=d$. In what follows, we encode this state in a space that arises from a parallel composition of many qubits. Then, similar to the qubit case, we can express the probabilities $p(a,b,c|x,y=\star,z)$ as
\begin{align}
    p(a,b,c|x,y=\star,z)=\tr{(\bigotimes_{i=1}^n\widetilde{M}^{C_i}_{c_i|z_i}) \, J(\Ne_{ab|x})}{},
\end{align}
and the functional is given by
\begin{align}
    I_{MDI}^{*}[\textbf{\textrm{p}}] &= \sum_{a,b,x,c,z} \xi^{abx}_{cz} \tr{(\bigotimes_{i=1}^n\widetilde{M}^{C_i}_{c_i|z_i}) \, J(\Ne_{ab|x})}{}, \\ \nonumber
&=\tr{\sum_{a,b,x}F_{abx}\, J(\Ne_{ab|x})}{}.
\end{align}
Therefore, $I_{MDI}^{*}[\textbf{\textrm{p}}] \geq 0$ for any $\textbf{\textrm{p}}$ arising from a quantum assemblage $\N_{\A\B|\X}$ with elements $\{\Ne_{ab|x}\}$.

Now, we will show that a false-positive detection is not possible even when the self-tested assemblage is of the form given in Eq.~\eqref{eq:app-tensor-MDI}. Let $\sigma_{\cc|\z}^{(r=0)}=(\bigotimes_{i=1}^n \widetilde{\sigma}_{c_i|z_i})^T$. Then, the Bell functional can be written as
\begin{align}
    I_{MDI}^{*}[\textbf{\textrm{p}}] =\tr{\sum_{a,b,x}F_{abx}\, \left(J(\Ne_{ab|x})\right)^{T}}{}.
\end{align}
To show that a transpose operation on the Choi matrix of an MDI assemblage cannot generate post-quantumness, we will again use Lemma~\ref{lemma}. Recall that any quantum MDI assemblage is defined by a set of measurement channels $\Theta_b^{BB' \rightarrow B_{out}}$. For each $b \in \B$, we can decompose $\Theta_b^{BB_{in} \rightarrow B_{out}}$ in terms of its Kraus operators to see that a transpose on the output space of this measurement channel $B_{out}$ can be passed onto the input of the channel defined on $BB_{in}$. It follows that
\begin{align}
\left(J(\Ne_{ab|x})\right)^{T} &= \tr{(M_{a|x} \otimes \Theta_b^{BB_{in} \rightarrow B_{out}} \otimes \id^C)[\rho_{AB} \otimes \ket{\phi^{n}}\bra{\phi^{n}}_{B_{in}C}]}{A}^{T}, \\ \nonumber 
&= \tr{(\id^{A} \otimes T^{B_{out}C})(M_{a|x} \otimes \Theta_b^{BB_{in} \rightarrow B_{out}} \otimes \id^C)[\rho_{AB} \otimes \ket{\phi^{n}}\bra{\phi^{n}}_{B_{in}C}]}{A}, \\ \nonumber 
&= \tr{(M_{a|x} \otimes \Theta_b^{'BB_{in} \rightarrow B_{out}} \otimes \id^C)[\rho_{AB}^{T_B} \otimes (\ket{\phi^{n}}\bra{\phi^{n}}_{B_{in}C})^{T}]}{A}, \\ 
\nonumber 
&= \tr{(M_{a|x} \otimes \Theta_b^{'BB_{in} \rightarrow B_{out}} \otimes \id^C)^{T_A}[\rho_{AB}^{T} \otimes (\ket{\phi^{n}}\bra{\phi^{n}}_{B_{in}C})^{T}]}{A}, \\ \nonumber 
&= \tr{((M_{a|x})^T \otimes \Theta_b^{'BB_{in} \rightarrow B_{out}} \otimes \id^C)[\rho_{AB}^{T} \otimes (\ket{\phi^{n}}\bra{\phi^{n}}_{B_{in}C})^{T}]}{A}.
\end{align}
Therefore, if $J(\Ne_{ab|x})$ corresponds to a valid quantum assemblage, its transpose $\left(J(\Ne_{ab|x})\right)^{T}$ also have a quantum realisation in terms of the measurements $\{(M_{a|x})^{T}\}$, measurement channels $\{ \Theta_b^{'BB_{in} \rightarrow B_{out}} \}$ and the states $\rho_{AB}^{T}$ and $(\ket{\phi^{n}}\bra{\phi^{n}}_{B_{in}C})^{T}$. Since this assemblage has a quantum realisation, it cannot create a false-positive detection in the activation protocol.

\subsection{Channel EPR scenario}

In the protocol for the channel EPR scenario, we used the self-test of a tensor product of Pauli operators (accompanied by an alignment procedure) even when dim($\cH_{B_{out}}$)$=2$. Moreover, the technique to show that there are no false-positive detections does not rely on any of the subsystems having a particular dimension. Therefore, the generalisation of this protocol to the qudit case is straight-forward. 

%

\end{document}